\RequirePackage[l2tabu, orthodox]{nag}
\documentclass[12pt]{article}
\usepackage{setspace}
\usepackage{makecell}
\newcolumntype{M}[1]{>{\centering\arraybackslash}m{#1}}
\usepackage[T1]{fontenc}


\usepackage{fouriernc}
\usepackage{amsmath}
\usepackage{amsthm}
\usepackage[scale=0.92]{tgschola}
\usepackage[scaled=0.88]{PTSans}
\usepackage{scalefnt,letltxmacro}
\LetLtxMacro{\oldtextsc}{\textsc}
\renewcommand{\textsc}[1]{\oldtextsc{\scalefont{1.25}#1}}
\usepackage{inconsolata}


\usepackage[plain,noend]{algorithm2e}

\usepackage{cleveref}

\DeclareMathOperator*{\argmin}{arg\,min}

\crefname{lemma}{lemma}{lemmas}
\Crefname{lemma}{Lemma}{Lemmas}
\crefname{thm}{theorem}{theorems}
\Crefname{thm}{Theorem}{Theorems}
\crefname{prop}{proposition}{propositions}
\Crefname{prop}{Proposition}{Propositions}
\crefname{algorithm}{algorithm}{algorithms}
\Crefname{algorithm}{Algorithm}{Algorithms}

\newtheorem{thm}{Theorem} 

\newtheorem{lemma}[thm]{Lemma}
\newtheorem{assumption}{Assumption}

\newcommand\independent{\protect\mathpalette{\protect\independenT}{\perp}}
\def\independenT#1#2{\mathrel{\rlap{$#1#2$}\mkern2mu{#1#2}}}

\usepackage[acronym,smallcaps,nowarn]{glossaries}

\usepackage[tmargin=1in, lmargin=1in, rmargin=1in, bmargin=2in]{geometry}
\usepackage{tcolorbox}

\newtheorem{theorem}{Theorem}

\usepackage{times}
\usepackage{bm}
\usepackage{natbib}

\makeatletter
\renewcommand{\algocf@captiontext}[2]{#1\algocf@typo. \AlCapFnt{}#2} 
\def\@algocf@capt@plain{top}
\renewcommand{\algocf@makecaption}[2]{%
  \addtolength{\hsize}{\algomargin}%
  \sbox\@tempboxa{\algocf@captiontext{#1}{#2}}%
  \ifdim\wd\@tempboxa >\hsize
    \hskip .5\algomargin%
    \parbox[t]{\hsize}{\algocf@captiontext{#1}{#2}}
  \else%
    \global\@minipagefalse%
    \hbox to\hsize{\box\@tempboxa}
  \fi%
  \addtolength{\hsize}{-\algomargin}%
}
\makeatother


\addtolength\topmargin{35pt}

\usepackage{xcolor}
\usepackage{amsfonts}
\usepackage{amssymb,amsthm}
\usepackage{bbm}
\usepackage{graphicx}
\usepackage{booktabs}
\usepackage{arydshln}
\usepackage{subcaption}
\usepackage{cleveref}

\usepackage{graphics}

\newcommand{\indicator}[1]{\mathbbm{1}_{ {#1} }}

\usepackage[defaultlines=4,all]{nowidow}

\title{\hspace{-0.7cm}\textbf{Minimal Dispersion Approximately Balancing Weights: \\ Asymptotic Properties and Practical Considerations}}

\author{
  Yixin Wang\\
  Department of Statistics\\
  Columbia University\\
  \texttt{yixin.wang@columbia.edu} \\
\\
  Jos\'{e} R. Zubizarreta\\
  Department of Health Care Policy\\
  Department of Statistics\\
  Harvard University\\
  \texttt{zubizarreta@hcp.med.harvard.edu} \\
}

\begin{document}

\maketitle


\begin{abstract} 
Weighting methods are widely used to adjust for covariates in
observational studies, sample surveys, and regression settings. In
this paper, we study a class of recently proposed weighting methods
which find the weights of minimum dispersion that approximately
balance the covariates. We call these weights \emph{minimal weights}
and study them under a common optimization framework. The key
observation is the connection between approximate covariate balance
and shrinkage estimation of the propensity score. This connection
leads to both theoretical and practical developments. From a
theoretical standpoint, we characterize the asymptotic properties of
minimal weights and show that, under standard smoothness conditions on
the propensity score function, minimal weights are consistent
estimates of the true inverse probability weights. Also, we show that
the resulting weighting estimator is consistent, asymptotically
normal, and semiparametrically efficient. From a practical standpoint,
we present a finite sample oracle inequality that bounds the loss
incurred by balancing more functions of the covariates than strictly
needed. This inequality shows that minimal weights implicitly bound
the number of active covariate balance constraints. We finally provide
a tuning algorithm for choosing the degree of approximate balance in
minimal weights. We conclude the paper with four empirical studies
that suggest approximate balance is preferable to exact balance,
especially when there is limited overlap in covariate distributions.
In these studies, we show that the root mean squared error of the
weighting estimator can be reduced by as much as a half with
approximate balance.
\end{abstract}

Keywords: Causal Inference; Missing Data; Observational Study; Sample Surveys; Weighting.


\section{Introduction}

\subsection{Weighting methods for covariate adjustment}

Weighting methods are widely used to adjust for observed covariates,
for example in observational studies of causal effects
\citep{rosenbaum1987model}, in sample surveys and panel data with unit
non-response \citep{robins1994estimation}, and in regression settings
with missing and/or mismeasured covariates
\citep{hirano2003efficient}. Weighting methods are popular because
they do not require explicitly modeling the outcome
\citep{rosenbaum1987model}. As a result, they are part of the design
stage as opposed to the analysis stage of the study
\citep{rubin2008objective}, which helps to maintain the objectivity of
the study and preserve the validity of its tests
\citep{rosenbaum2010design1}. Furthermore, weighting methods are
considered to be multipurpose in the sense that one set of weights can
be used to estimate the mean of multiple outcomes
\citep{little2014statistical}.

\glsresetall

Conventionally, the weights are estimated by modeling the propensities
of receiving treatment or exhibiting missingness and then inverting
the predicted propensities. However, with this approach it can be
difficult to properly adjust for or balance the observed covariates.
The reason is that this approach only balances covariates in
expectation, by the law of large numbers, but in any particular data
set it can be difficult to balance covariates, especially if the data
set is small or if the covariates are sparse
\citep{zubizarreta2011matching}. In addition, this approach can result
in very unstable estimates when a few observations have very large
weights (e.g., \citealt{kang2007}). To address these problems, a
number of methods have been proposed recently. Instead of explicitly
modeling the propensities of treatment or missingness, these methods
directly balance the covariates. Some of these methods also minimize a
measure of dispersion of the weights. Examples include
\citet{hainmueller2012balancing}, \citet{zubizarreta2015stable},
\citet{chan2016globally}, \citet{zhao2017entropy},
\citet{wong2018kernel}, and \citet{zhao2018covariate}. Earlier and
related methods include \citet{deville1992calibration},
\citet{hellerstein1999imposing}, \citet{imai2014covariate}, and
\citet{li2018balancing}. Two promising methods that use similar
weights together with outcome information are
\citet{athey2018approximate} and \citet{hirshberg2018augmented}. See
\citet{yiu2018covariate} for a framework for constructing weights such
that the association between the covariates and the treatment
assignment is eliminated after weighting.

Most of these weighting methods balance covariates exactly rather than
approximately. This is a subtle but important difference because
approximate balance can trade bias for variance whereas exact balance
cannot. Also, exact balance may not admit a solution whereas
approximate balance may do so. For a fixed sample size, approximate
balance may balance more functions of the covariates than exact
balance.

In this paper, we study the class of weights of minimum dispersion
that approximately balance the covariates. We call these weights
\emph{minimal dispersion approximately balancing weights}, or simply
\emph{minimal weights}. While it has been shown that instances of
minimal weights work well in practice in both low- and
high-dimensional settings (e.g., \citealt{zubizarreta2015stable,
athey2018approximate, hirshberg2018augmented}), and there are valuable
theoretical results (e.g., \citealt{athey2018approximate,
hirshberg2018augmented, wong2018kernel}), important aspects of their
theoretical properties and their practical usage remain to be studied.


\subsection{Theoretical properties and practical considerations of
minimal weights}

In this paper, we study the class of minimal weights. The key
observation is the connection between approximate covariate balance
and shrinkage estimation of the propensity score. This connection
leads to both theoretical and practical developments.

From a theoretical standpoint, we first establish a connection between
minimal weights and shrinkage estimation of the propensity score. We
show that the dual of the minimal weights optimization problem is
similar to parameter estimation in generalized linear models under
$\ell_1$ regularization. This connection allows us to establish the
asymptotic properties of minimal weights by leveraging results on
propensity score estimation. In particular, we show that under
standard smoothness conditions minimal weights are consistent
estimates of the true inverse probability weights both in the $\ell_2$
and $\ell_\infty$ norms.

Next we study the asymptotic properties of a linear estimator based on
minimal weights. We show that the weighting estimator is consistent,
asymptotically normal, and semiparametrically efficient. This result
is related to \citet{chan2016globally}, \citet{fan2016improving},
\citet{zhao2017entropy}, and \citet{zhao2018covariate} in that it
establishes the asymptotic optimality of a similar weighting
estimator. It differs, however, in that it encompasses both
approximate balance and exact balance. The technical conditions
required by this result are among the weakest in the literature: they
are considerably weaker than those required by
\citet{hirano2003efficient} and \citet{chan2016globally}, and are
comparable to those by \citet{fan2016improving}.

From a practical standpoint, we address two problems in minimal
weights: choosing the number of basis functions and selecting the
degree of approximate balance. We derive a finite-sample upper bound
for the potential loss incurred by balancing too many basis functions
of the covariates. This result shows that the loss due to balancing
too many basis functions is hedged by minimal weights because the
number of active balancing constraints is implicitly bounded.

We finally provide a tuning algorithm for calibrating the degree of
approximate balance in minimal weights. This is a general problem in
weighting and thus this algorithm can be of independent interest. We
conclude with four empirical studies that suggest approximate balance
is preferable to exact balance, especially when there is limited
overlap in covariate distributions. These studies show that
approximate balancing weights with the proposed tuning algorithm
yields weighting estimators with considerably lower root mean squared
error than their exact balancing counterparts.


\section{A shrinkage estimation view of minimal weights}

For simplicity of exposition, we focus on the problem of estimating a
population mean from a sample with incomplete outcome data. We assume
the outcomes are missing at random \citep{little2014statistical}.
Under the closely related assumption of strong ignorability
\citep{rosenbaum1983central}, this problem is analogous to estimating
an average treatment effect in an observational study (see
\citealt{kang2007} for an example). See \citet{kang2007} for an
example connecting the problems of causal inference and estimation
with incomplete outcome data.

Consider a random sample of $n$ units from a population of interest,
where some of the units in the sample are missing due to nonresponse.
Let $Z_i$ be the response indicator with $Z_i = 1$ if unit $i$
responds and $Z_i = 0$ otherwise, $i=1, \ldots, n$. Write $r$ for the
total number of respondents. Denote $X_i$ as the (vector of) observed
covariates of unit $i$ and $Y_i$ as the outcome.

Assume there is overlap; that is, the propensity score $\pi(x)
=\mathrm{pr}(Z = 1\mid X = x)$ satisfies $0<\pi(x)<1.$ Furthermore,
assume that the responses are missing at random. This assumption
states that missingness can be fully explained by the observed
covariates: $Y_i \independent Z_i \,|\, X_i$ \citep{robins1997non}.

The goal is to estimate the population mean of the outcome
$\bar{Y}=E(Y_i)$. We use the linear estimator $\hat{Y}_w =
\sum^n_{i=1}w_iZ_iY_i$ for estimation, where the weights $w_i$ adjust
for or balance the observed covariates.

Conventionally, the weights $w_i$ are obtained by fitting a model for
the propensity score $\pi(x)$ and then inverting the predicted
propensities. Despite being widely used, this approach has two
problems in practice: first, balancing the covariates can be difficult
due to misspecification of the propensity score model, if the sample
size is small, or if the covariates are sparse; second, the weighting
estimator can be unstable due to the variability of the weights (see,
e.g., \citealt{zubizarreta2015stable} for a discussion).

To address these problems, several weighting methods have been
proposed recently. These methods are encompassed by the following
mathematical program
\begin{equation}	\label{eq:covbalineq}
\begin{aligned}
& \underset{\boldsymbol{w}}{\text{minimize}}
&& \sum^n_{i=1} Z_i f(w_i)
&&& (1.1) \\
& \text{subject to}
&& \left|\sum^n_{i=1} w_i Z_i B_k(X_i) - \frac{1}{n}\sum^n_{i=1} B_k(X_i) \right| \leq \delta_k, \: k = 1, \ldots, K 
&&& \hspace{0.31cm} (1.2) 
\end{aligned}
\end{equation} 
where $f$ is a convex function of the weights, and $B_k(X_i), k = 1,
\ldots, K$, are smooth functions of the covariates. Typically, the
functions $B_k$ are basis functions for $E(Y_i)$ and are chosen as the
moments of the covariate distributions (see assumptions
\ref{assumption:wtconsistency}.4 and \ref{assumption:wtconsistency}.6
below). Other common choices of $B_k$ include spline
\citep{de1972calculating} and wavelet bases \citep{singh2006optimal}.
The constants $\delta_k$ constrain the imbalances in $B_k$. They are
summarized in the vector $\delta_{K\times 1} = (\delta_1, \ldots,
\delta_K) \geq 0$. In (1.2), we can also constrain the weights to sum
to one, $\sum^n_{i=1} w_i = 1$, and to take positive values, $0 \leq
w_i$, $i = 1, \ldots, n$. These two constraints together ensure that
the weights do not extrapolate; that is, $0 \leq w_i \leq 1, i =
1,\ldots, n$. This is related to the sample boundedness property
discussed by \citet{robins2007comment}, which requires the estimator
to lie within the range of observed values of the outcome.

We call the class of weights that solve the above mathematical program
\emph{minimal dispersion approximately balancing weights}, or simply
\emph{minimal weights}. They have minimal dispersion because they
explicitly minimize a measure of dispersion or extremity of the
weights. They are approximate balancing weights because they have the
flexibility to approximately balance covariates as opposed to exactly.
This flexibility plays an important role in practice by trading bias
for variance.

Special cases of minimal weights are the entropy balancing weights
\citep{hainmueller2012balancing} with $f(x) = x \log x$ and $\delta =
0$, the stable balancing weights \citep{zubizarreta2015stable} with
$f(x) = (x - 1/r)^2$ and $\delta \in \mathbb{R}^+_0$, and the
empirical balancing calibration weights \citep{chan2016globally} with
$f(x) = D(x, 1)$, where $D(x, x_0)$ is a distance measure for a fixed
$x_0\in\mathbb{R}$ that is continuously differentiable in
$x_0\in\mathbb{R}$, non-negative and strictly convex in $x$, and
$\delta = 0$. With the exception of the stable balancing weights,
these methods balance the covariates exactly by letting $\delta = 0$
and assuming the optimization problem is feasible. Related methods
that balance covariates approximately through a Lagrange relaxation of
the balance constraints include \citet{kallus2016generalized},
\citet{athey2018approximate}, \citet{hirshberg2018augmented},
\citet{wong2018kernel}, and \citet{zhao2018covariate}.

The dynamics between the feasibility and the efficacy of covariate
balancing constraints are central to estimation with incomplete
outcome data. Tightening these constraints could make the optimization
program infeasible, but relaxing them could compromise removing biases
due to covariate imbalances.

Studying these dynamics, however, calls for an alternative formulation
of Problem (\ref{eq:covbalineq}) whose solution is easier to
characterize. Theorem \ref{thm:shrink} provides such a formulation. It
writes the dual problem of Problem (\ref{eq:covbalineq}) as an
unconstrained problem by leveraging the structure of minimal weights.
Since Problem (\ref{eq:covbalineq}) is convex, its optimal solution
and the solution to the dual problem will be the same
\citep{boyd2004convex}. Dual formulations of balancing procedures have
been studied by \citet{zhao2017entropy} and \citet{zhao2018covariate}.
\Cref{thm:shrink} helps us to articulate the role of
\emph{approximate} balance constraints.

The dual formulation in \Cref{thm:shrink} establishes a connection
between minimal weights and shrinkage estimation of the propensity
score. At a high level, minimal weights are implicitly fitting a model
for the inverse propensity score with $\ell_1$ regularization; the
model is a generalized linear model on $B_k(\cdot)$, the basis
functions of the covariates.

\begin{theorem}
\label{thm:shrink} The dual of Problem (\ref{eq:covbalineq}) is
equivalent to the unconstrained optimization problem
\begin{align}
\label{eq:dual}
\underset{\lambda}{\text{minimize}} \;
\frac{1}{n}\sum^n_{j=1} \left[-Z_j n\rho \{ B(X_j)^\top \lambda \} +
B(X_j)^\top \lambda\right] + |\lambda|^\top \delta
\end{align} where $\lambda_{K\times 1}$ is the vector of dual
variables associated with the $K$ balancing constraints, and $B(X_j) =
(B_1(X_j), \ldots, B_K(X_j))$ denotes the $K$ basis functions of the
covariates, with $\rho(t) = t/n - t (h')^{-1}(t) + h((h')^{-1}(t))$
and $h(x) = f(1/n - x)$. Moreover, the primal solution $w_j^* $
satisfies
\begin{align}
\label{eq:dual_sol} w_j^* = \rho' \{ B(X_j)^\top \lambda^\dagger \},
\; j = 1, \ldots, n,
\end{align} where $\lambda^\dagger$ is the solution to the dual
optimization problem.
\end{theorem}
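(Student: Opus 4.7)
The plan is to derive the dual via standard Lagrangian duality in two moves: first handle the two-sided balancing constraints by a reparameterization that produces the $\ell_1$-penalty $|\lambda|^\top\delta$, and then perform the inner minimization in $w$ through a change of variables that matches the definition of $\rho$ and $h$.

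First, rewrite each two-sided constraint $|A_k(w)|\le\delta_k$, with $A_k(w)=\sum_i w_i Z_i B_k(X_i)-\tfrac{1}{n}\sum_i B_k(X_i)$, as the pair $A_k(w)\le\delta_k$ and $-A_k(w)\le\delta_k$, and attach non-negative multipliers $\lambda_k^+,\lambda_k^-$. The Lagrangian then reads
\[
L(w,\lambda^+,\lambda^-)=\sum_i Z_i f(w_i)+\sum_k(\lambda_k^+-\lambda_k^-)A_k(w)-\sum_k(\lambda_k^++\lambda_k^-)\delta_k.
\]
Substitute $\lambda_k=\lambda_k^+-\lambda_k^-$ and $s_k=\lambda_k^++\lambda_k^-$; the map is a bijection between $\{(\lambda^+,\lambda^-):\lambda^\pm\ge 0\}$ and $\{(\lambda,s):s\ge|\lambda|\}$. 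Since $s$ enters the Lagrangian only through $-s^\top\delta$ with $\delta\ge 0$, the partial maximum of the dual over $s\ge|\lambda|$ is attained at $s=|\lambda|$, yielding the penalty $-|\lambda|^\top\delta$.

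Second, with $\lambda$ fixed, minimize $L$ in $w$. The problem decouples across $i$ and is trivial for $Z_i=0$; for $Z_i=1$ it reduces to $\min_{w_i}\bigl[f(w_i)+w_i B(X_i)^\top\lambda\bigr]$. Set $u_i=1/n-w_i$ so that $f(w_i)=h(u_i)$ with $h(x)=f(1/n-x)$, which converts the inner problem into
\[
\tfrac{1}{n}B(X_i)^\top\lambda+\min_{u}\bigl[h(u)-u\,B(X_i)^\top\lambda\bigr].
\]
By convex conjugacy applied to $h$, the minimizer satisfies $h'(u_i^\star)=B(X_i)^\top\lambda$, i.e., $u_i^\star=(h')^{-1}(B(X_i)^\top\lambda)$, and the optimal value equals exactly $\rho(B(X_i)^\top\lambda)$ by the stated definition of $\rho$. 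Differentiating $\rho$ (the derivative-of-the-inverse terms cancel by the envelope theorem) gives $\rho'(t)=1/n-(h')^{-1}(t)$, so the primal minimizer is $w_i^\star=1/n-u_i^\star=\rho'(B(X_i)^\top\lambda)$, establishing the representation in equation (3).

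Third, assembling the two steps shows that after minimizing over $w$ and $s$ the Lagrangian equals
\[
\sum_i Z_i\,\rho\bigl(B(X_i)^\top\lambda\bigr)-\tfrac{1}{n}\sum_i B(X_i)^\top\lambda-|\lambda|^\top\delta,
\]
whose negative is the objective in equation (2); thus maximizing the dual over $\lambda^\pm\ge0$ is equivalent to the unconstrained minimization stated in the theorem. The primal is convex with affine constraints, so strong duality holds under feasibility, and the Lagrangian stationarity condition at $\lambda^\dagger$ gives $w_j^\star=\rho'(B(X_j)^\top\lambda^\dagger)$ as the optimal primal solution.

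The main subtlety is the $(\lambda^+,\lambda^-)\leftrightarrow(\lambda,s)$ bijection that converts the two-sided balancing slack into an $\ell_1$ penalty on the dual variables; the remainder is careful convex-conjugate bookkeeping in the coordinates $u_i=1/n-w_i$ that align with the definitions of $h$ and $\rho$.
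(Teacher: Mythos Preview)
Your proposal is correct and follows essentially the same route as the paper: split the box constraints into two one-sided inequalities with nonnegative multipliers, compute the inner minimization in $w$ via the change of variable $u=1/n-w$ (the paper's $s_i$) to obtain $\rho$ as the Legendre-type transform of $h$, and then collapse the pair $(\lambda^+,\lambda^-)$ to a single signed $\lambda$ with an $|\lambda|^\top\delta$ penalty. The only cosmetic difference is in this last step: the paper first writes the dual in $2K$ nonnegative variables (Lemma~\ref{lemma:vanilladual}) and then argues by contradiction that $\lambda^\dagger_{+,k}\lambda^\dagger_{-,k}=0$ at the optimum, whereas you use the bijection $(\lambda^+,\lambda^-)\leftrightarrow(\lambda,s)$ with $s\ge|\lambda|$ and observe that the partial maximization in $s$ is trivial; these are equivalent, and your formulation has the minor advantage of not needing $\delta>0$ strictly to conclude the optimal \emph{value} equals $-|\lambda|^\top\delta$.
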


The proof is in \Cref{sec:dualproof}. The key to this result is the
form of the constraints (1.2). These box constraints allow us to
eliminate the positivity constraints on the dual variables after a
change of variables.

In Theorem \ref{thm:shrink}, the function $\rho(\cdot)$ is a
transformation of the measure of dispersion of the weights $f(\cdot)$
in (1.1). For example, when $f(x) = x\log x$, as in the entropy
balancing weights \citep{hainmueller2012balancing}, we have $\rho(x) =
-\exp(-x-1)$ and $\rho'(x) = \exp(-x-1)$, which implies a propensity
score model of the form $\pi(x) =
\exp \{ B(x)^\top \lambda + 1 \}$; and when $f(x) = (x-1/r)^2$, as in
the stable balancing weights \citep{zubizarreta2015stable}, we have
$\rho(x) = -x^2/4 + x/r$ and $\rho'(x) = -x/2+1/r$, which implies
$\pi(x) = \{ 1/r - B(x)^\top \lambda/2 \}^{-1}$. At a high level, the
function $\rho'$ can be seen as a link function in generalized linear
models. With specific choices of $\rho'$, \Cref{eq:dual} resembles a
regularized version of the tailored loss function approach in
\citet{zhao2018covariate}.

\Cref{eq:dual} comes down to $\ell_1$ shrinkage estimation. The
inverse propensity score function is estimated as a generalized linear
model on the basis functions $B$ with link function $\rho'$. The dual
variables in $\lambda$ can be seen as the coefficients of the basis
functions in the propensity score regression model. Estimation is
regularized by the weighted $\ell_1$ norm of the coefficients in
$\lambda$. The loss function is
\begin{align}
\label{eq:dualloss} L(\lambda) = -Z n\rho \{ B(x)^\top\lambda \} +
B(x)^\top\lambda.
\end{align} 
The expectation of this loss function is minimized when $\lambda$
satisfies $\{n\pi(x)\}^{-1} = \rho'\{B(x)^\top\lambda\}=w^*$. This is
the key equation connecting minimal weights to the propensity score
$\pi(x)$.

\Cref{thm:shrink} says that if the propensity score depends heavily on
a given covariate, then Problem (\ref{eq:covbalineq}) will try hard to
balance this covariate by assigning it a large dual variable. The dual
variables in $\lambda$ can be interpreted as shadow prices of the
covariate balance constraints (see Section 5.6 of
\citealt{boyd2004convex}). If a constraint has a high shadow price,
then relaxing it by a little will result in a large reduction in the
optimization objective, and vice versa. On the other hand, the
$\ell_1$ penalty decreases the dependence of the weights on covariates
that are hard to balance.

\Cref{thm:shrink} is related to the dual formulation of covariate
balancing scoring rules under regularization
\citep{zhao2018covariate}. The two results have similarities but
differ in their objectives: we use the dual formulation of Problem
(\ref{eq:covbalineq}) to analyze the asymptotic and finite-sample
properties of minimal weights (\Cref{sec:asymptotics} and
\Cref{subsec:oracle}), whereas \citet{zhao2018covariate} uses a
related dual formulation to show that increased regularization in
covariate balancing scoring rules can deteriorate covariate balance.

\section{Asymptotic properties}

\label{sec:asymptotics}

\Cref{thm:shrink} connects minimal weights to shrinkage estimation of
the inverse propensity score function. In this section, we leverage
this connection to characterize the asymptotic properties of minimal
weights. We assume the following conditions hold and prove that
minimal weights are consistent estimates of the inverse propensity
score function $1/\pi(x)$.

\begin{assumption} Assume the following conditions hold:
\vspace*{-6pt}
\label{assumption:wtconsistency}
\begin{enumerate}
\item The minimizer $\lambda^o = \argmin_{\lambda\in \Theta}  E
[-Zn\rho \{ B(X_i)^\top\lambda \} +B(X_i)^\top\lambda ]$ is unique,
where $\Theta$ is the parameter space for $\lambda$.

\item $\lambda^o\in int(\Theta)$, where $\Theta$ is a compact set and
$int(\cdot)$ stands for the interior of a set.

\item There exist a constant $0 < c_0 < 1/2,$ such that $c_0\leq
n\rho'(v) \leq 1 - c_0$ for any $v = B(x)^\top\lambda$ with
$\lambda\in int(\Theta)$. Also, there exist constants $c_1 < c_2 < 0$,
such that $c_1 \leq n\rho''(v) \leq c_2<0$ in some small neighborhood
$\mathcal{B}$ of $v^* = B(x)^\top\lambda^\dagger$.

\item There exists a constant $C$ such that
$\sup_{x\in\mathcal{X}}||B(x)||_2\leq CK^{1/2}$ and
$E\{B(X_i)B(X_i)^\top\}\leq C.$

\item The number of basis functions $K$ satisfies $K = o(n).$

\item There exist constants $r_\pi > 1$ and $\lambda_1^*$ such that
the true propensity score function satisfies
$\sup_{x\in\mathcal{X}}|m^*(x)-B(x)^\top\lambda_1^*| = O(K^{-r_\pi}),$
where $m^*(\cdot) = (\rho')^{-1}[ 1/\{n\pi(x)\} ]$.

\item $||\delta||_2 = O_p \{ K^{1/2}(\log K )/ n + K^{1/2-r_\pi} \}$.
\end{enumerate}
\end{assumption}

Assumptions \ref{assumption:wtconsistency}.1 and
\ref{assumption:wtconsistency}.2 are standard regularity conditions
for consistency of minimum risk estimators. Assumption
\ref{assumption:wtconsistency}.3 enables consistency of
$\lambda^\dagger$ to translate into consistency of the weights. In
particular, the fact that $\rho''$ is bounded implies that the
derivative of the inverse propensity score function is bounded. This
is satisfied by common choices of $f$ in Problem
(\ref{eq:covbalineq}), including the variance, the mean absolute
deviation, and the negative entropy of the weights. Assumption
\ref{assumption:wtconsistency}.4 is a standard technical condition
that restricts the magnitude of the basis functions; see also
Assumption 4.1.6 of \citet{fan2016improving} and Assumption 2(ii) of
\citet{newey1997convergence}. This condition is satisfied by many
classes of basis functions, including the regression spline,
trigonometric polynomial, and wavelet bases
\citep{newey1997convergence, horowitz2004nonparametric, chen2007large,
belloni2015some, fan2016improving}. Assumption
\ref{assumption:wtconsistency}.5 controls the growth rate of the
number of basis functions relative to the number of units. Assumption
\ref{assumption:wtconsistency}.6 is a uniform approximation condition
on the inverse propensity score function. It requires the basis $B(x)$
to be complete, or $m^*(x)$ to be well approximated by a linear model
on $B(x)$. For splines and power series, this assumption is satisfied
by $r_\pi = s / d$, where $s$ is the number of continuous derivatives
of $m^*(\cdot)$ that exist and $d$ is the dimension of $x$ with a
compact domain \citep{newey1997convergence}. Assumption
\ref{assumption:wtconsistency}.7 quantifies the extent to which the
equality covariate balancing constraints can be relaxed such that the
consistency of the resulting weight estimates is maintained.

Under these assumptions, we can prove that minimal weights are
consistent for the inverse propensity score function.

\begin{theorem}
\label{thm:wtconsistency} Let $\lambda^\dagger$ be the solution to
Problem (\ref{eq:covbalineq}) and $w^*(x)= \rho' \{
B(x)^\top\lambda^\dagger \}$. Then, under the conditions in Assumption
\ref{assumption:wtconsistency},
\begin{enumerate}
\item $\sup_{x\in\mathcal{X}} |nw^*(x) -1/\pi(x)|  = O_p \{K(\log K)/n
+ K^{1-r_\pi} \} = o_p(1),$
\item $||nw^*(x)-1/\pi(x)||_{P,2}  = O_p \{K(\log K)/n + K^{1-r_\pi}
\} = o_p(1).$
\end{enumerate}
\end{theorem}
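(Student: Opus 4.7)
The plan is to use \Cref{thm:shrink} to recast weight consistency as an M-estimation problem on the dual parameter $\lambda$, derive an $\ell_2$-rate of convergence for $\lambda^\dagger$ toward the best linear approximation $\lambda_1^*$ appearing in Assumption~\ref{assumption:wtconsistency}.6, and push this rate through the Lipschitz link $v \mapsto n\rho'(v)$ to the inverse propensity score scale. By \Cref{thm:shrink}, $\lambda^\dagger$ minimizes $L_n(\lambda) + |\lambda|^\top\delta$ with
\[
L_n(\lambda) = \frac{1}{n}\sum_{j=1}^n \bigl[-Z_j n\rho\{B(X_j)^\top\lambda\} + B(X_j)^\top\lambda\bigr],
\]
whose population counterpart $L(\lambda) = E[-Zn\rho\{B(X)^\top\lambda\} + B(X)^\top\lambda]$ has unique interior minimizer $\lambda^o$ by Assumptions~\ref{assumption:wtconsistency}.1--\ref{assumption:wtconsistency}.2. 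Since $\nabla L(\lambda) = E[\{1-\pi(X)n\rho'(B(X)^\top\lambda)\}B(X)]$ and $1/\{n\pi(x)\} = \rho'(m^*(x))$, combining the Lipschitz bound on $n\rho'$ from Assumption~\ref{assumption:wtconsistency}.3 with the approximation bound of Assumption~\ref{assumption:wtconsistency}.6 yields $\|\nabla L(\lambda_1^*)\|_2 = O(K^{1/2-r_\pi})$, so $\lambda_1^*$ is an approximate stationary point of $L$ and essentially coincides with $\lambda^o$ up to that order.

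The next step is to combine strong convexity of $L$ with the optimality of $\lambda^\dagger$. The Hessian $\nabla^2 L(\lambda) = E[-Z n\rho''(B(X)^\top\lambda)B(X)B(X)^\top]$ dominates $|c_2|E[Z B(X)B(X)^\top]$ on the neighborhood of $v^*$ from Assumption~\ref{assumption:wtconsistency}.3, and its smallest eigenvalue is bounded away from zero via Assumption~\ref{assumption:wtconsistency}.4 together with the overlap condition. Writing $\Delta = \lambda^\dagger - \lambda_1^*$, the quadratic lower bound together with the empirical optimality $L_n(\lambda^\dagger) + |\lambda^\dagger|^\top\delta \leq L_n(\lambda_1^*) + |\lambda_1^*|^\top\delta$ and the bound $(|\lambda_1^*|-|\lambda^\dagger|)^\top\delta \leq \|\Delta\|_2\|\delta\|_2$ produces the master inequality
\[
c\|\Delta\|_2^2 \leq \bigl[(L-L_n)(\lambda^\dagger) - (L-L_n)(\lambda_1^*)\bigr] + \|\Delta\|_2\bigl(\|\nabla L(\lambda_1^*)\|_2 + \|\delta\|_2\bigr).
\]

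Controlling the empirical-process bracket will be the main obstacle, since the gradient of $L_n$ is $K$-dimensional with entries that grow with $K$. I would linearize it around $\lambda_1^*$ as $\{\nabla L_n(\lambda_1^*)-\nabla L(\lambda_1^*)\}^\top\Delta$ plus a second-order remainder, bound the first term by Bernstein-type concentration coordinate-by-coordinate combined with a union bound over the $K$ coordinates (using Assumptions~\ref{assumption:wtconsistency}.4--\ref{assumption:wtconsistency}.5), and render the remainder negligible through a preliminary consistency argument for $\lambda^\dagger$ based on uniform convergence of $L_n$ over the compact $\Theta$ (Assumptions~\ref{assumption:wtconsistency}.1--\ref{assumption:wtconsistency}.2). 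Solving the master inequality and invoking Assumption~\ref{assumption:wtconsistency}.7 for the penalty term gives $\|\Delta\|_2 = O_p\{K^{1/2}(\log K)/n + K^{1/2-r_\pi}\}$. Both conclusions of the theorem then follow from the decomposition
\[
nw^*(x) - 1/\pi(x) = \bigl[n\rho'\{B(x)^\top\lambda^\dagger\} - n\rho'\{B(x)^\top\lambda_1^*\}\bigr] + \bigl[n\rho'\{B(x)^\top\lambda_1^*\} - n\rho'(m^*(x))\bigr]
\]
together with the Lipschitzness of $n\rho'$ and Assumptions~\ref{assumption:wtconsistency}.4 and \ref{assumption:wtconsistency}.6: the sup-norm claim uses $\sup_x\|B(x)\|_2 \leq CK^{1/2}$, while the $L^2(P)$ claim uses $(E\|B(X)\|_2^2)^{1/2} = O(K^{1/2})$.
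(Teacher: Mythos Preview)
Your proposal is essentially the same argument as the paper's, organized slightly differently. The paper packages the $\ell_2$-rate for $\lambda^\dagger-\lambda_1^*$ as a separate lemma and obtains it by the classical convexity localization device: it Taylor-expands the penalized empirical objective $G(\lambda_1^*+\Delta)-G(\lambda_1^*)$, shows via the $\rho''$ bound that the quadratic term is nonnegative, and proves the linear term is dominated on the boundary of a ball of the target radius, so a minimizer must lie inside. You instead derive a master inequality from strong convexity of the population loss plus the empirical optimality of $\lambda^\dagger$. Both routes reduce to bounding the empirical score at $\lambda_1^*$, namely $\bigl\|\tfrac{1}{n}\sum_j\{1-Z_j n\rho'(B(X_j)^\top\lambda_1^*)\}B(X_j)\bigr\|_2$, split it into the stochastic piece $\tfrac{1}{n}\sum_j(1-Z_j/\pi_j)B(X_j)$ and the approximation piece of order $K^{1/2-r_\pi}$, and then push the resulting rate through the Lipschitz map $v\mapsto n\rho'(v)$ and the decomposition you wrote.

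The one substantive technical difference is the concentration step: the paper applies the matrix Bernstein inequality of Tropp directly to the $K$-vector $\tfrac{1}{n}\sum_j(1-Z_j/\pi_j)B(X_j)$, whereas you propose coordinate-wise scalar Bernstein plus a union bound over the $K$ coordinates. Under Assumption~\ref{assumption:wtconsistency}.4 as stated (only $\sup_x\|B(x)\|_2\leq CK^{1/2}$, not a uniform per-coordinate bound), the coordinate-wise route followed by $\|\cdot\|_2\leq K^{1/2}\|\cdot\|_\infty$ picks up an extra $\sqrt{K}$ and delivers $O_p\{(K\log K/n)^{1/2}\}$ rather than the $O_p\{K^{1/2}(\log K)/n\}$ you need; matrix Bernstein avoids this loss and is what gives the stated rate. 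A secondary advantage of the paper's localization argument is that it sidesteps the preliminary consistency step you flag for controlling the second-order empirical remainder: by working on the boundary of a shrinking ball, the intermediate point in the mean-value expansion is automatically in the neighborhood where Assumption~\ref{assumption:wtconsistency}.3 applies.
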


The proof is in \Cref{sec:asymptotics_proof}. It consists of two
steps. First, we show that $\lambda^\dagger$, the solution to the dual
problem, is close to $\lambda_1^*$ in the $\ell_2$ norm. Consistency
of the weights then follows from the Lipschitz property of $\rho'$ and
the bounds on the basis functions in Assumption
\ref{assumption:wtconsistency}. In the special case of exact balance
($\delta=0$), \Cref{thm:wtconsistency} is related to a result in
\citeauthor{fan2016improving} (2016; Appendix D, page 46). This
connection stems from \Cref{thm:shrink}, as minimal weights are
estimating the inverse propensity score.

We now assume the following additional conditions hold and prove that
the resulting weighting estimator is consistent and semiparametrically
efficient for the mean outcome.

\begin{assumption} Assume the following conditions hold:
\vspace*{-6pt}
\label{assumption:regularity}
\begin{enumerate}
  \item $E|Y_i-Y(X_i)|<\infty$, where $Y(x) = E(Y_i|X = x)$.
  \item $E(Y_i^2) < \infty$, where $\bar{Y} = E(Y_i)$ is the
    population mean of the outcome.
  \item There exist $r_y > 1/2$ and $\lambda_2^*$ such that the
  outcome model $Y(x) = E(Y_i|X = x)$ satisfies
  $\sup_{x\in\mathcal{X}}|Y(x)-B(x)^\top\lambda_2^*| = O(K^{-r_y}).$

  \item Let $m^*(\cdot)\in\mathcal{M}$ and $Y(\cdot)\in\mathcal{H},$
  where $m^*(\cdot) = (\rho')^{-1}[ 1/\{n\pi(x)\} ]$ and $Y(\cdot)$ is
  the mean outcome function. $\mathcal{M}$ and $\mathcal{H}$ are two
  sets of smooth functions satisfying $\log n_{[]}\{\varepsilon,
  \mathcal{M}, L_2(P)\}\leq C(1/\varepsilon)^{1/k_1}$ and $\log
  n_{[]}\{\varepsilon,
  \mathcal{H}, L_2(P)\}\leq C(1/\varepsilon)^{1/k_2}$, where $C$ is a
  positive constant and $k_1, k_2 > 1/2$. $n_{[]}\{\varepsilon,
  \mathcal{M}, L_2(P)\}$ denotes the covering number of $\mathcal{M}$
  by $\varepsilon$-brackets.

  \item $n^{0.5(r_\pi+r_y-0.5)^{-1}} = o(K).$
\end{enumerate}
\end{assumption}

Assumptions \ref{assumption:regularity}.1 and
\ref{assumption:regularity}.2 are standard regularity conditions that
ensure that the estimators have finite moments. Assumption
\ref{assumption:regularity}.3 is a uniform approximation condition
similar to Assumption \ref{assumption:wtconsistency}.6 but on the mean
outcome function $Y(x) = E(Y|X=x)$. Assumption
\ref{assumption:regularity}.4 requires that the complexity of the
function classes $\mathcal{M}$ and $\mathcal{H}$ does not increase too
quickly as $\varepsilon$ approaches 0. This assumption is satisfied,
for example, by the H\"{o}lder class with smoothness parameter $s$
defined on a bounded convex subset of $\mathbb{R}^d$ with $s/d > 1/2$
\citep{van1996weak, fan2016improving}; see also Assumption 4.1.7 in
\citet{fan2016improving}. Assumption \ref{assumption:regularity}.5
controls the rate at which $K$ can increase with respect to $n$. In
particular, the rate depends on the sum of $r_\pi$ and $r_y$, which is
the approximation error of the propensity score and the outcome
functions, respectively. This assumption relates to the product
structure of error bounding in doubly robust estimation; see, e.g., Equation (41) of \citet{kennedy2016semiparametric}.

\begin{theorem}
\label{thm:asymptotics} Suppose that the conditions in assumptions
\ref{assumption:wtconsistency} and \ref{assumption:regularity} hold.
Then
\[n^{1/2}(\hat{Y}_{w^*} - \bar{Y}) \stackrel{d}{\rightarrow}
\mathcal{N}(0, V_{opt}),\] where $V_{opt} = \text{var}\{Y(X_i)\} +
E\{\text{var}(Y_i|X_i)/\pi(X_i)\}$ equals the semiparametric efficiency
bound. If in addition $r_y > 1$ holds, then the estimator
\begin{align*}
\begin{split}
\hat{V}_K = &\frac{1}{n}\sum^n_{i=1} \left[ nZ_iw_iY_i -
\sum_{i=1}^nw_iY_i \right.\\ &\left. - B(X_i)^\top \left\{
\frac{1}{n}\sum_{i=1}^nZ_iw_iB(X_i)^\top B(X_i) \right\}^{-1}
\right.
\left.
\left\{ \frac{1}{n}\sum_{i=1}^nZ_iw_iB(X_i)^\top Y_i \right\} (nZ_iw_i
- 1) \right]^2.
\end{split}
\end{align*} is a consistent estimator of the asymptotic variance
$V_{opt}$.
\end{theorem}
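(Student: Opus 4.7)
My plan is to derive an asymptotically linear representation of $\hat{Y}_{w^*} - \bar{Y}$ that matches the efficient influence function for the mean outcome under missingness at random, and then to handle $\hat{V}_K$ by a direct plug-in argument using the uniform weight consistency from Theorem~\ref{thm:wtconsistency}. Write
\[ \hat{Y}_{w^*} - \bar{Y} = \underbrace{\sum_{i=1}^n w^*_i Z_i\{Y_i - Y(X_i)\}}_{\text{(I)}} \; + \; \underbrace{\sum_{i=1}^n w^*_i Z_i Y(X_i) - \frac{1}{n}\sum_{i=1}^n Y(X_i)}_{\text{(II)}} \; + \; \underbrace{\frac{1}{n}\sum_{i=1}^n Y(X_i) - \bar{Y}}_{\text{(III)}}. \]
Term (III) is a centred i.i.d.\ average, contributing the $\operatorname{var}\{Y(X_i)\}$ piece of $V_{opt}$ by the usual CLT.

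For Term (I), I would substitute $w^*_i = (n\pi(X_i))^{-1} + \{w^*_i - (n\pi(X_i))^{-1}\}$. The leading piece $n^{-1}\sum_i Z_i\{Y_i - Y(X_i)\}/\pi(X_i)$ is again a sum of i.i.d.\ mean-zero variables and yields the $E\{\operatorname{var}(Y_i\mid X_i)/\pi(X_i)\}$ piece of $V_{opt}$; the remainder is controlled by Cauchy--Schwarz together with $\|nw^* - 1/\pi\|_{P,2} = O_p\{K(\log K)/n + K^{1-r_\pi}\}$ from Theorem~\ref{thm:wtconsistency} and Assumption~\ref{assumption:regularity}.2. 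For Term (II), Assumption~\ref{assumption:regularity}.3 gives $Y(x) = B(x)^\top \lambda_2^* + O(K^{-r_y})$ uniformly. Plugging this linearization in, and using the approximate balance constraint $|\sum_i w^*_i Z_i B(X_i) - n^{-1}\sum_i B(X_i)| \le \delta$ with Assumption~\ref{assumption:wtconsistency}.7 on $\|\delta\|_2$, Term (II) is bounded in absolute value by $\|\lambda_2^*\|_2 \, \|\delta\|_2$ plus a residual coming from the $K^{-r_y}$ approximation acting through the weights. Assumption~\ref{assumption:regularity}.5, which balances $r_\pi$ and $r_y$, is precisely what is needed to make the cross products $o_p(n^{-1/2})$; this is the step I expect to be the main obstacle, since it requires a careful accounting in which the product structure $K^{-r_\pi} \cdot K^{-r_y}$ and the rates $\|\delta\|_2$, $\|nw^* - 1/\pi\|_{P,2}$ combine under the entropy conditions of Assumption~\ref{assumption:regularity}.4, much as in the doubly robust literature (\citealt{kennedy2016semiparametric}). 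Assumption~\ref{assumption:regularity}.4 will be invoked via a bracketing/stochastic equicontinuity argument to replace $\lambda_2^*$-type empirical averages with their population counterparts uniformly over the sequence of sieve approximations.

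Combining Terms (I)--(III) yields the influence function representation
\[ n^{1/2}(\hat{Y}_{w^*} - \bar{Y}) = \frac{1}{n^{1/2}} \sum_{i=1}^n \left[ \frac{Z_i\{Y_i - Y(X_i)\}}{\pi(X_i)} + Y(X_i) - \bar{Y} \right] + o_p(1), \]
and the Lindeberg--L\'evy CLT then gives asymptotic normality with variance equal to $\operatorname{var}\{Y(X_i)\} + E\{\operatorname{var}(Y_i\mid X_i)/\pi(X_i)\}$, which is the semiparametric efficiency bound for this problem (\citealt{hirano2003efficient}).

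For the consistency of $\hat{V}_K$, the plan is to show each empirical component converges to its population analogue. Uniform consistency of $nw^*_i$ for $1/\pi(X_i)$ from Theorem~\ref{thm:wtconsistency} together with Assumption~\ref{assumption:wtconsistency}.3 implies that $nZ_i w_i Y_i$ has its sample average converging to $E\{Z_i Y_i/\pi(X_i)\} = \bar{Y}$ and, under $r_y > 1$, the weighted sample second-moment matrix $n^{-1}\sum_i Z_i w_i B(X_i) B(X_i)^\top$ and its cross-moment with $Y_i$ converge in a sense sufficient to ensure that $B(X_i)^\top \{\cdots\}^{-1}\{\cdots\}$ consistently estimates $Y(X_i)$; the stronger smoothness $r_y > 1$ is used to make the bias of this sieve regression for $Y$ of order $o_p(1)$ in the $L_2$-weighted norm induced by the weights. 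The factor $(nZ_iw_i - 1)$ then plays the role of $Z_i/\pi(X_i) - 1$. Squaring and averaging, a standard uniform law of large numbers (again using Assumption~\ref{assumption:regularity}.4) shows that $\hat{V}_K$ converges in probability to $E[\{Z_i(Y_i - Y(X_i))/\pi(X_i) + Y(X_i) - \bar{Y}\}^2] = V_{opt}$. The subtle step here is controlling the inverse of the weighted Gram matrix of $B(X_i)$, which I would handle by combining Assumption~\ref{assumption:wtconsistency}.4 with the boundedness of $nw^*_i$ away from zero and infinity granted by Assumption~\ref{assumption:wtconsistency}.3.
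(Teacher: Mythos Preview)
Your decomposition into (I)--(III) and the subsequent split of (II) via $Y(x)=B(x)^\top\lambda_2^*+O(K^{-r_y})$ is exactly the paper's decomposition into $\frac{1}{n}\sum_i S_i + R_0 + R_1 + R_2$, and your treatment of (III), of the balance-constraint piece of (II) (the paper's $R_2$), and of the doubly-robust cross term in (II) (the paper's $R_1$, where Assumption~\ref{assumption:regularity}.5 enters) all match the paper's route. The plan for $\hat V_K$ is also the same plug-in argument the paper uses.

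The one genuine gap is your handling of the remainder in Term (I), namely
\[
R_0=\sum_{i=1}^n\Bigl(w^*_i-\tfrac{1}{n\pi(X_i)}\Bigr)Z_i\{Y_i-Y(X_i)\}.
\]
Cauchy--Schwarz together with $\|nw^*-1/\pi\|_{P,2}$ is not sharp enough here: it gives $n^{1/2}R_0=O_p\bigl(n^{1/2}\|nw^*-1/\pi\|_{P,2}\bigr)$, and under Assumptions~\ref{assumption:wtconsistency}.5 and~\ref{assumption:regularity}.5 this is \emph{not} $o_p(1)$ in general (for instance when $r_\pi$ is only slightly larger than $1$). The reason is that Cauchy--Schwarz ignores the conditional mean-zero structure $E[Z_i\{Y_i-Y(X_i)\}\mid X_i]=0$. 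The paper exploits this structure through an empirical-process argument: writing $f_m(Z,Y,X)=Z\{Y-Y(X)\}\bigl[n\rho'\{m(X)\}-1/\pi(X)\bigr]$, one has $Ef_m=0$ for \emph{every} $m$, so $n^{1/2}R_0=\mathbb{G}_n(f_{\hat m})$ with $\hat m(x)=B(x)^\top\lambda^\dagger$; stochastic equicontinuity over $\{f_m:\|m-m^*\|_\infty\le\delta_0\}$, controlled by the bracketing entropy of $\mathcal{M}$ in Assumption~\ref{assumption:regularity}.4, then yields $n^{1/2}R_0=o_p(1)$ directly. In other words, the bracketing/Donsker machinery you reserve for Term (II) is equally indispensable for Term (I); without it the remainder rate does not close under the stated assumptions.
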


The proof is in \Cref{sec:asymptotics_proof}. It uses empirical
process techniques as in \citet{fan2016improving}. The proof involves
the standard decomposition of $\hat{Y}_{w^*} - \bar{Y}$ into four
components, where three of them converge to zero in probability, and
the other one is asymptotically normal and semiparametrically
efficient. Each of the first three components can be controlled by the
bracketing numbers of the function classes to which the inverse
propensity score function and the outcome function belong. Assumption
\ref{assumption:regularity}.2 provides this control.

We conclude this section on asymptotic properties with a discussion on
the uniform approximability assumptions
\ref{assumption:wtconsistency}.6 and \ref{assumption:regularity}.3.
These assumptions depend on both the smoothness of the propensity
score and outcome functions and the dimension $d$ of the covariates.
Suppose both functions belong to the H\"{o}lder class with smoothness
parameter $s$ on the domain $[0,1]^d$. Assumptions
\ref{assumption:wtconsistency}.6 and \ref{assumption:regularity}.3 are
among the weakest in the literature, as they require $s/d > 1$ on the
propensity score function and $s/d > 1/2$ on the outcome. They are
weaker than the assumptions in \citet{hirano2003efficient} which
require $s/d > 7$ on the propensity score function and $s/d > 1$ on
the outcome function, as well as those in \citet{chan2016globally}
which require $s/d>13$ on the propensity score function and $s/d >
3/2$ on the outcome function. They are comparable to those in
\citet{fan2016improving} which require $s/d>1/2$ on the propensity
score function and $s/d > 1/2$ on the outcome function plus the sum of
these two ratios not exceeding $3/2$. To establish these results under
weak assumptions, we use Bernstein's inequality as in
\citet{fan2016improving} and leverage the particular structure of
minimal weights.


\section{Practical considerations}
\label{sec:practice}

\subsection{The loss due to balancing too many functions of the
covariates is bounded}
\label{subsec:oracle}

An important question that arises in practice relates to the cost of
balancing too many basis functions of the covariates. In other words,
practitioners are concerned about how big the loss will be if they
balance more basis functions than needed. This is a valid concern
because \Cref{thm:shrink} implies that, for each basis function $B_k$
we balance, we are implicitly including a similar term in the inverse
propensity score model. Therefore, balancing too many basis functions
could result in estimation loss due to fitting an overly complex
model. The following oracle inequality relieves this concern, as it
shows that this loss is bounded.

\begin{theorem} \label{thm:oracle_words} Let $\lambda^{\dagger}$ be
the solution to the dual of the minimal weights problem
(\ref{eq:dual}) and $\lambda^{\ddagger}$ be the solution to the dual
of the exact balancing weights problem with the number of active
constraints $||\lambda^{\ddagger}||_0$ capped by some constant $C_0 >
0$. Then, under standard technical conditions (see \Cref{sec:oracle}
for details), \begin{align*} E\{L(\lambda^{\dagger}) -
L(\lambda^*_1)\} \leq 3E\{L(\lambda^{\ddagger}) - L(\lambda^*_1)\} +
c_0||\lambda^{\ddagger}||_0, \end{align*} where $\lambda_1^*$ is the
oracle solution as in Assumption \ref{assumption:wtconsistency}.6,
$L(\cdot)$ is the dual loss as in \Cref{eq:dualloss}, and $c_0$ is a
positive constant depending on the number of basis functions $K$.
\end{theorem}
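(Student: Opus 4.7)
The plan is to follow the standard template for oracle inequalities in $\ell_1$-penalized M-estimation, specialized to the dual loss $L$ in \Cref{eq:dualloss}. Writing $P_n$ for the empirical measure and $P$ for its population counterpart, the first step is to exploit the fact that $\lambda^\dagger$ minimizes $P_n L(\cdot) + |\cdot|^\top\delta$ over $\Theta$. This yields the basic inequality $P_n L(\lambda^\dagger) + |\lambda^\dagger|^\top\delta \leq P_n L(\lambda^\ddagger) + |\lambda^\ddagger|^\top\delta$, which after adding and subtracting the population means and $L(\lambda^*_1)$ rearranges into
\begin{align*}
P L(\lambda^\dagger) - P L(\lambda^*_1) \;\leq\; P L(\lambda^\ddagger) - P L(\lambda^*_1) + (P_n - P)[L(\lambda^\ddagger) - L(\lambda^\dagger)] + |\lambda^\ddagger|^\top\delta - |\lambda^\dagger|^\top\delta.
\end{align*}
Everything then reduces to playing the empirical-process fluctuation off against the $\ell_1$ penalty.

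For the fluctuation term I would use the explicit structure of $L$: the partial derivative with respect to $\lambda_k$ is $[1 - Z n\rho'\{B(X)^\top\lambda\}]\,B_k(X)$, which is uniformly bounded and has bounded variance by Assumptions \ref{assumption:wtconsistency}.3 and \ref{assumption:wtconsistency}.4. Applying Bernstein's inequality in each coordinate together with a union bound over $k = 1, \ldots, K$ gives $\max_k |(P_n - P)\partial_k L| = O_p\{K^{1/2}(\log K)/n\}$, which is precisely the rate prescribed for $\delta_k$ by Assumption \ref{assumption:wtconsistency}.7. A convexity argument expressing $L(\lambda^\ddagger) - L(\lambda^\dagger)$ as an integral of the gradient along the segment $[\lambda^\dagger,\lambda^\ddagger]$ then gives $|(P_n - P)[L(\lambda^\ddagger) - L(\lambda^\dagger)]| \leq \|\lambda^\dagger-\lambda^\ddagger\|_1 \cdot \max_k |(P_n-P)\partial_k L|$, so on the high-probability event that each $\delta_k$ dominates these coordinate fluctuations the empirical-process term is absorbed into a bounded multiple of $\|\lambda^\dagger-\lambda^\ddagger\|_1 \cdot \max_k \delta_k$.

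To sharpen the resulting inequality into the stated bound with the factor of $3$, I would exploit sparsity: letting $S = \mathrm{supp}(\lambda^\ddagger)$ with $|S| \leq \|\lambda^\ddagger\|_0$, the gap $|\lambda^\ddagger|^\top\delta - |\lambda^\dagger|^\top\delta$ is at most $\sum_{k\in S}\delta_k|\lambda^\dagger_k - \lambda^\ddagger_k| - \sum_{k\notin S}\delta_k|\lambda^\dagger_k|$, which restricts the active $\ell_1$ gap to $|S|$ coordinates. Cauchy-Schwarz then yields $\sum_{k\in S}\delta_k|\lambda^\dagger_k-\lambda^\ddagger_k| \leq |S|^{1/2}\|\delta\|_\infty \|\lambda^\dagger-\lambda^\ddagger\|_2$. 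The strong convexity of $L$ near $\lambda^*_1$ guaranteed by the lower bound on $-\rho''$ in Assumption \ref{assumption:wtconsistency}.3 provides a quadratic lower bound relating $\|\lambda^\dagger-\lambda^\ddagger\|_2^2$ to the excess risk $L(\lambda^\dagger) - L(\lambda^\ddagger)$. A Young-type inequality $2ab \leq \epsilon a^2 + \epsilon^{-1} b^2$ then splits the cross term, absorbing $\|\lambda^\dagger-\lambda^\ddagger\|_2^2$ into the left-hand side at the cost of turning the coefficient of $L(\lambda^\ddagger) - L(\lambda^*_1)$ into $3$ and leaving behind a residual proportional to $|S|\,\|\delta\|_\infty^2$, which defines the constant $c_0$ and accounts for its dependence on $K$.

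The hardest step will be the uniform empirical-process control, since $\lambda^\dagger$ is itself random: a pointwise Bernstein bound does not suffice, and a peeling argument over dyadic balls around $\lambda^*_1$ using the compactness of $\Theta$ in Assumption \ref{assumption:wtconsistency}.2 will be needed to obtain the coordinate-wise bound uniformly in $\lambda^\dagger$. A related technical subtlety is that the strong-convexity estimate in Assumption \ref{assumption:wtconsistency}.3 is only local in a neighborhood $\mathcal{B}$ of $v^* = B(x)^\top\lambda^\dagger$; chaining it with the consistency already established in \Cref{thm:wtconsistency} localizes the analysis to a shrinking neighborhood of $\lambda^*_1$ where the quadratic lower bound is valid, and this localization must be carried carefully through the final expectation step to produce the bound stated in the theorem.
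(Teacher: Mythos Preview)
Your proposal is a from-scratch derivation of the oracle inequality along the standard lasso template: basic inequality, coordinate-wise fluctuation control, sparsity splitting of the penalty, quadratic-margin lower bound, and a Young-type absorption to produce the factor $3$ and the $\|\lambda^\ddagger\|_0$ residual. This is sound in outline, and the difficulties you flag (uniform control over the random $\lambda^\dagger$ and the only-local strong convexity in Assumption~\ref{assumption:wtconsistency}.3) are real and would need to be handled carefully, most likely through symmetrization and contraction rather than the gradient-integral bound you sketch, since the latter leaves the $\lambda$-dependence inside the empirical process.

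The paper, by contrast, does not carry out any of these steps directly. Its proof of \Cref{thm:oracle_words} consists of recognizing the dual problem \eqref{eq:dual} as a weighted-$\ell_1$ penalized M-estimator with $\delta_k = \delta^+ \hat\sigma_k$, and then invoking Theorem~2.2 of \citet{van2008high} as a black box. The entire argument reduces to verifying three hypotheses of that theorem: Assumption~L (the loss $L_w$ is convex and Lipschitz in $w$, checked from the boundedness of $\rho', \rho''$), Assumption~B (the quadratic margin condition, checked by computing $\partial^2 E L_w/\partial w^2$ at $w^0 = 1/(n\pi)$ and using overlap), and Assumption~C (a compatibility condition relating $\sum_{k\in\mathcal K}\sigma_k|\lambda_k-\tilde\lambda_k|$ to $\|w_\lambda - w_{\tilde\lambda}\|$). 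All empirical-process work, peeling, and the Bousquet concentration step are absorbed into the cited result. Your route would reproduce the special case of van~de~Geer's theorem needed here; the paper's route is considerably shorter but depends on external machinery and on the specific technical Assumptions~\ref{assumption:oracleassumption} and~\ref{assumption:oracletechnical} tailored to that theorem rather than on Assumption~\ref{assumption:wtconsistency} alone.
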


See \Cref{sec:oracle} for technical details. This oracle inequality
bounds $E\{L(\lambda^{\dagger}) - L(\lambda^*_1)\} $, the excess risk
of the minimal weights estimator relative to the oracle estimator
$\lambda_1^*$. We note that the optimal dual loss $L(\lambda)$ is
equal to the optimal primal loss $\sum_{i=1}^nZ_if(w_i)$ (1.1),
because the optimization problem (\ref{eq:covbalineq}) is convex.  A
smaller excess risk translates into a smaller estimation error of the
causal effect estimator.

This inequality compares the linear weighted estimator with two
versions of minimal weights: one with approximate balance, the other
with exact balance. The exact balancing version caps the number of
exact balancing constraints at $C_0$. The inequality shows that the
two estimators have similar risks.

More specifically, when there are few active covariate balancing
constraints, $||\lambda^\ddagger||_0$ will be small. The inequality
then says that the excess risk of approximate balancing in minimal
weights is of the same order as that of exact balancing with its
number of balancing constraints capped. Therefore, balancing
covariates approximately can be seen as implicitly capping the number
of active balancing constraints.

At a high level, this oracle inequality bounds the loss of balancing
too many functions of the covariates with minimal weights.
Fundamentally, the approximate balancing constraints in Problem
(\ref{eq:covbalineq}) are performing $\ell_1$ regularization in the
inverse propensity score estimation problem. This sparse behavior of
the balancing constraints is common in practice; for example, in the
2010 Chilean post-earthquake survey data of
\citeauthor{zubizarreta2015stable} (2015; Figure 1).

\subsection{A tuning algorithm for choosing the degree of approximate
balance $\delta$}

Another practical question that arises with minimal weights is how to
choose the degree of approximate balance $\delta$. In a similar way to
the regularization parameter accompanying the $\ell^1$ norm in lasso
estimation, $\delta$ is a tuning parameter that the investigator needs
to choose. In our setting, choosing $\delta$ is particularly hard;
since there are no outcomes, there is not a clear out-of-sample target
to optimize toward. For choosing $\delta$, we propose \Cref{alg:tune}.

\begin{algorithm}[!h]
\caption{Choosing $\delta$ in minimal weights\label{alg:tune}}
\begin{tabbing}
   \enspace For each $\delta$ in a grid $\mathcal{D}\subset [0, K^{-1/2}]$ of candidate
   imbalances   \\
   \qquad Compute $\{w_i\}_{i=1}^n$ by solving Problem
   (\ref{eq:covbalineq}) \\
   \qquad For each $k \in \{1, ..., K\}$ \\
   \qquad \qquad Draw a bootstrap sample $\mathcal{K}_k$ from the
   original data\\
   \qquad \qquad Evaluate covariate balance $C_k$ on the sample
   $\mathcal{K}_k$, \\
  \qquad \qquad \qquad$C_k:=||\{\sum_{i\in\mathcal{K}_k} w_i Z_i
   B_k(X_i)\}/(\sum_{i\in\mathcal{K}_k} w_i
   Z_i)-\sum^n_{i=1}B_k(X_i)/n||_2/\text{sd}\{B_k(X)\}$\\
   \qquad Compute the mean covariate balance, $C_S(\delta):=
\sum_{k=1}^KC_k/K$\\
\enspace Output $\delta^* = \argmin_{\delta\in
\mathcal{D}}C_S(\delta)$
\end{tabbing}
\end{algorithm}

\glsresetall The key idea behind \Cref{alg:tune} is to use the
covariate balance in the bootstrapped samples as a proxy for how well
the target parameters are estimated. The intuition is that in theory
the true inverse propensity score weights will balance the population
as well as \emph{samples} from this population. Therefore, if the
weights are well-calibrated and robust to sampling variation, they
will have this same property. To this end, we evaluate the covariate
balance on bootstrapped samples $C_S$ with the weights computed from
the original data set. In the following section, we show that the
value of $\delta$ selected by \Cref{alg:tune} often coincides with or
neighbors the optimal $\delta$ that gives the smallest root mean
squared error in estimating the target parameters. We recommend
choosing values of $\delta$ smaller than $K^{-1/2}$ because larger
values are likely to break the conditions in Assumption
\ref{assumption:wtconsistency}.


\subsection{Empirical studies}

We illustrate the performance of minimal weights in four empirical
studies. In these four studies we set $\delta$ with \Cref{alg:tune}
and consider three dispersion measures of the weights: the sum of
absolute deviations, $f(w)=|w-\bar{w}|$, the variance, $f(w) =
(w-1/r)^2$ \citep{zubizarreta2015stable}, and the negative entropy,
$f(w) = w\log w$ \citep{hainmueller2012balancing}. We find that
minimal weights with approximate balance admit a solution in cases
where exact balance does not. Approximate balancing also achieves
considerably lower root mean squared than exact balancing when there
is limited overlap in covariate distributions.

We defer three of the simulation studies to \Cref{sec:empirical_supp}:
one on the \citet{kang2007} example, one on the
\citet{lalonde1986evaluating} data set, and another on the
\citet{wong2018kernel} simulation. Here we present one simulation
study based on the right heart catheterization data set of
\citet{connors1996effectiveness}.

The right heart catheterization data set was first used to study the
effectiveness of right heart catheterization in the initial care of
critically ill patients. The data set has 2998 observations and 77
variables, including covariates, a treatment indicator, and the
outcome. Balancing the 75 available covariates exactly is not feasible
in most of the simulated data sets, so for comparison purposes we
restrict the analyses to the 23 covariates listed in Table 1 of
\citet{connors1996effectiveness}. We generate the data sets and
calculate the minimal weights (both with exact and approximate
balance) using only these 23 covariates.

Based on this data set, we generate 1000 simulated data sets as
follows. We construct the treatment indicator $Z_i$ as $Z_i =
\indicator{ \{ Z^*_i > 0 \} }$ where $Z^*_i = (\alpha + \beta X_i) / c
+ \text{Unif}(-0.5, 0.5)$ and $X_i$ are the observed covariates. In
  the model for $Z^*_i$, $\alpha$ and $\beta$ are obtained by fitting
  a logistic regression to the original treatment indicator in the
  original data set. We simulate two scenarios, one with good overlap
  $(c = 10)$ and another with bad overlap $(c = 1)$. For both
  scenarios, we generate pairs of potential outcomes $\{Y_i(0),
  Y_i(1)\}$ by fitting a regression model to the original treated and control outcomes, and predicting on the
  entire sample. We obtain the observed outcome by letting $Y_i =
  Z_iY_i(1)+(1-Z_i)Y_i(0)$.

In both scenarios, we compare the root mean squared error of the
estimated average treatment effects on both the entire and treated
populations, using both minimal weights with \Cref{alg:tune} and
minimal weights with exact balance (i.e., with $\delta=0$). The
results are presented in \Cref{fig:rhc_ate_fig} and \Cref{table:rhc}.

\begin{table}
\begin{subtable}[]{.5\linewidth}
\begin{tabular}{lrrrr}
\toprule
\multicolumn{1}{c}{} & \multicolumn{2}{c}{Good Overlap}
&\multicolumn{2}{c}{Bad Overlap}\\
\multicolumn{1}{c}{Dispersion} & \multicolumn{1}{c}{Exact} &
\multicolumn{1}{c}{Apprx.} & \multicolumn{1}{c}{Exact} &
\multicolumn{1}{c}{Apprx.}\\
\midrule Abs. Dev.& 0.19 & \textbf{0.18} &  - &
\textbf{0.27}\\ Variance& \textbf{0.16} & 0.17 &  -  &
\textbf{0.26}\\ Neg. Ent.& \textbf{0.16} & \textbf{0.16}&  - &
\textbf{0.27} 
\\
\bottomrule
\end{tabular}
\caption{Average treatment effect
\label{table:rhc_ate}}
\end{subtable}
\begin{subtable}[]{.5\linewidth}
\begin{tabular}{lrrrr}
\toprule
\multicolumn{1}{c}{} & \multicolumn{2}{c}{Good Overlap}
&\multicolumn{2}{c}{Bad Overlap}\\
\multicolumn{1}{c}{Dispersion} & \multicolumn{1}{c}{Exact} &
\multicolumn{1}{c}{Apprx.} & \multicolumn{1}{c}{Exact} &
\multicolumn{1}{c}{Apprx.}\\
\midrule Abs. Dev. & \textbf{0.10} & \textbf{0.10} &  0.24 &
\textbf{0.08}\\ Variance& \textbf{0.09} & \textbf{0.09} &  0.18  &
\textbf{0.07}\\ Neg. Ent. & 0.10 & \textbf{0.09}& 0.20 &
\textbf{0.10} \\
\bottomrule
\end{tabular}
\caption{Average treatment effect on the treated
\label{table:rhc_att}}
\end{subtable}
\caption{Root mean squared error for (a) the average treatment effect
and (b) the average treatment effect on the treated. We bold the
lowest errors for each measure of dispersion. The symbol ``-''
indicates that exact balancing does not admit a solution. In the case
of bad overlap, balancing covariates approximately reduces the error
of the average treatment effect on the treated by a half compared to
exact balance.
\label{table:rhc}}
\end{table}

\begin{figure*}[h!]
    \centering
    \begin{subfigure}[t]{0.5\textwidth}
        \centering
        \includegraphics[width=\linewidth]{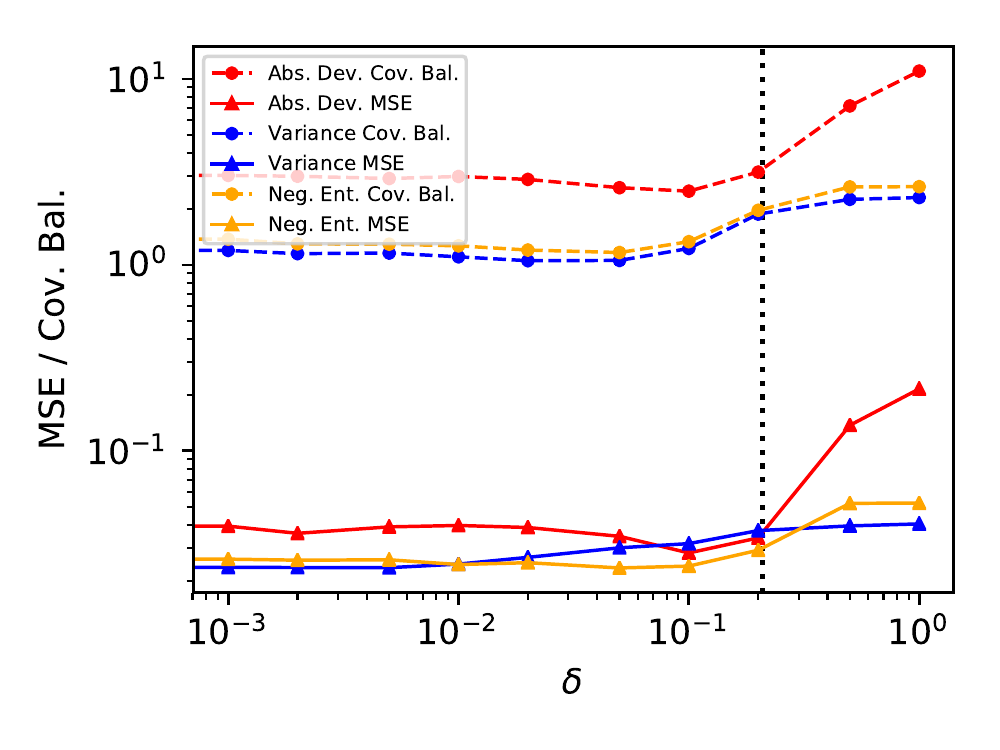}
        \caption{Good overlap, average treatment effect}
    \end{subfigure}%
    \begin{subfigure}[t]{0.5\textwidth}
        \centering
        \includegraphics[width=\linewidth]{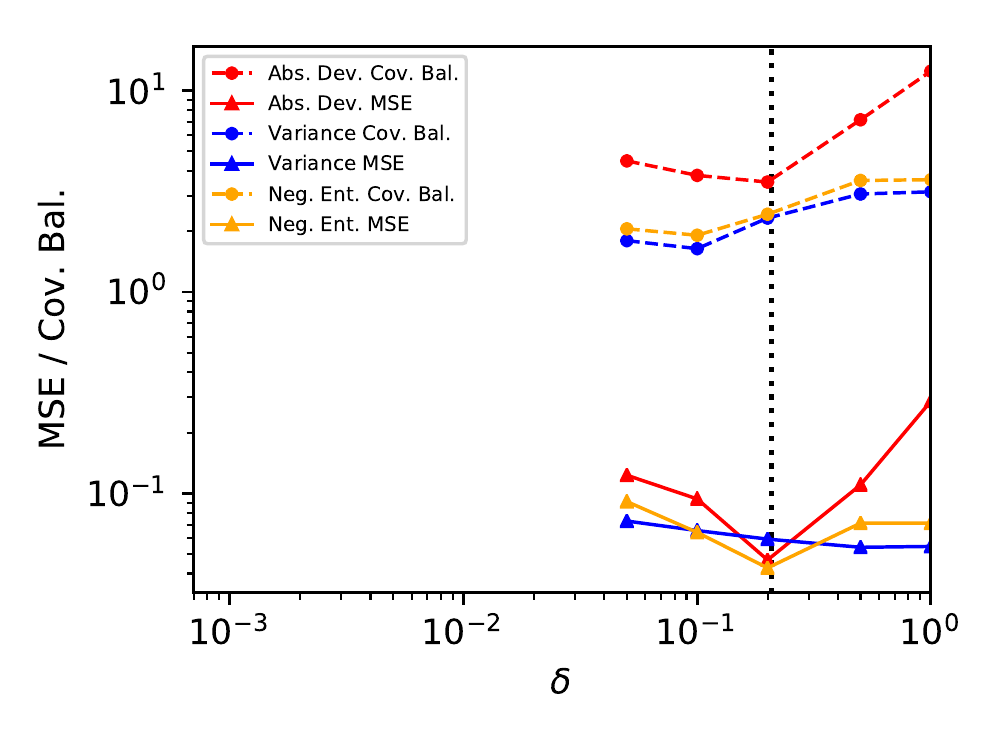}
        \caption{Bad overlap, average treatment effect}
    \end{subfigure}
        \begin{subfigure}[t]{0.5\textwidth}
        \centering
        \includegraphics[width=\linewidth]{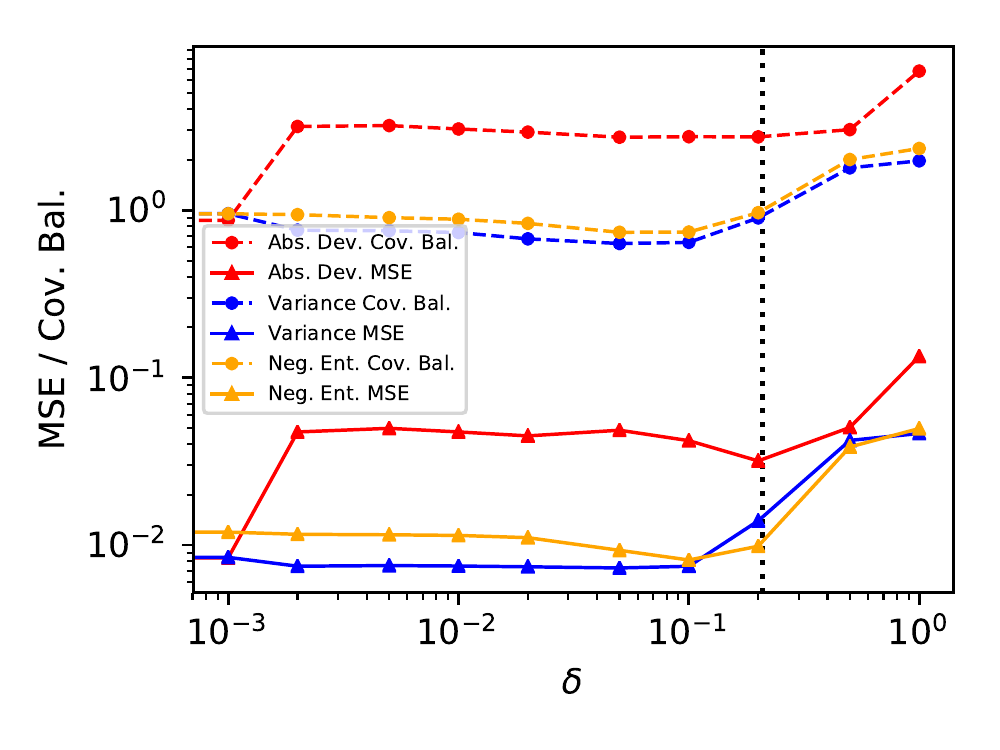}
        \caption{Good overlap, average treatment effect on the treated}
    \end{subfigure}%
    \begin{subfigure}[t]{0.5\textwidth}
        \centering
        \includegraphics[width=\linewidth]{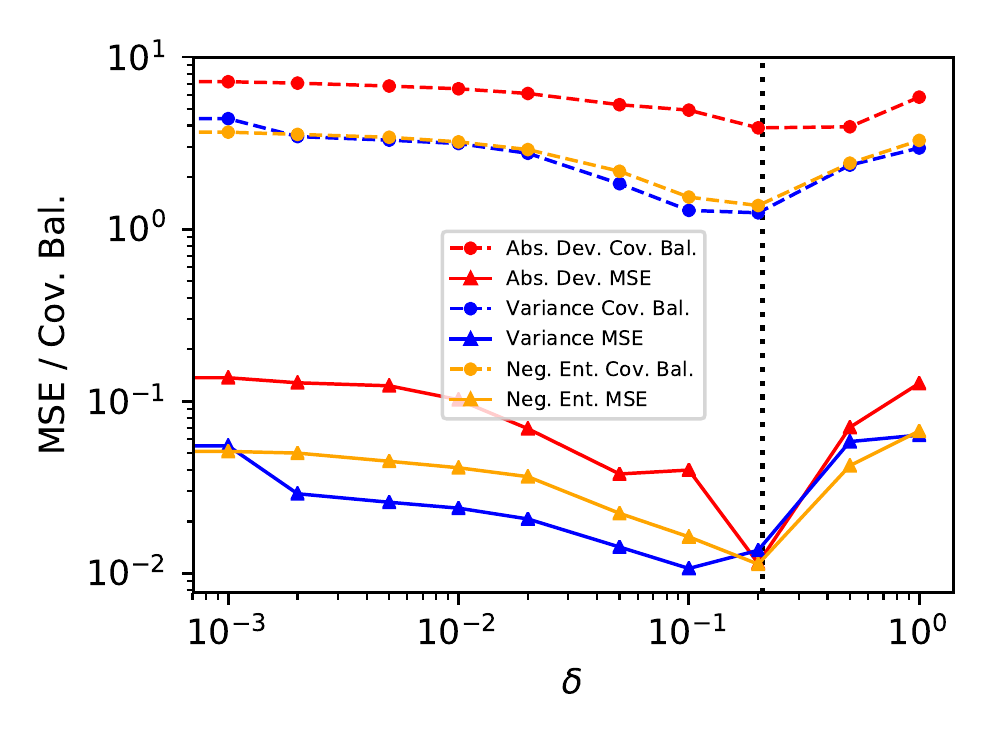}
        \caption{Bad overlap, average treatment effect on the treated}
    \end{subfigure}
    \caption{Mean squared error and bootstrapped covariate balance for
different values of the tuning parameter $\delta$. In the horizontal
axis, $\delta$ starts at 0. The vertical dotted line indicates $\delta
= K^{-1/2}$, where $K$ is the number of basis functions balanced.
Selecting $\delta$ according to the bootstrapped covariate balance, as
in \Cref{alg:tune}, often coincides with or neighbors the optimal
$\delta$ with the smallest error. We recommend choosing values of
$\delta$ smaller than $K^{-1/2}$ as greater values are likely to break
the conditions in Assumption \ref{assumption:wtconsistency}.
\label{fig:rhc_ate_fig}}
\end{figure*}

\Cref{table:rhc}(a) presents the root mean squared error of minimal
weights in estimating the average treatment effect. When the data
exhibits bad overlap, minimal weights provide good estimates whereas
their exact balancing counterpart does not admit a solution. With good
overlap, minimal weights with approximate balancing performs similarly
to exact balancing.

\Cref{table:rhc}(b) shows the results for the average treatment effect
on the treated. In this case, both exact and approximate balance admit
solutions under bad overlap. The table shows that approximate balance
can markedly reduce the root mean squared error relative to exact
balance. We also note that, while we are in a low-dimensional regime
(we balance fewer basis functions than the total number of
observations), approximate balance (or $\ell_1$-regularization) still
helps to reduce the error. The reason is that approximate balance
trades bias for variance. In fact, when there is bad overlap,
traditional weighting estimators that use weights that balance
covariates exactly tend to have high variance as they rely heavily on
a few observations. In such cases, approximate balance can ``pull
back'' from those observations and trade bias for variance to reduce
the overall error.

Figure \ref{fig:rhc_ate_fig} shows that the root mean squared error of
the effect estimates is sensitive to the choice of $\delta$. Moreover,
the value of $\delta$ selected by \Cref{alg:tune} often coincides with
the optimal value of $\delta$ that produces the lowest mean squared
error (solid lines in Figure \ref{fig:rhc_ate_fig}). Again,
\Cref{alg:tune} selects the value of $\delta$ that minimizes the
bootstrapped covariate balance (dashed lines in Figure
\ref{fig:rhc_ate_fig}). We observe that when $\delta$ achieves the
lowest bootstrapped covariate balance (dashed lines) it also reaches
the lowest error (solid lines). In the figure, the dotted line
indicates a value of $\delta$ equal to $K^{-1/2}$, where $K$ is the
number of basis functions of the covariates being balanced. We
recommend choosing values of $\delta$ smaller than $K^{-1/2}$ for
Assumption 1.7 required by \Cref{thm:asymptotics} to hold.

In general, minimal weights tuned with  \Cref{alg:tune} exhibit better
empirical performance in the right heart catheterization data set than
their exact balancing counterparts. Empirical studies with the
\citet{kang2007} example, the \citet{lalonde1986evaluating} data set,
and the \citet{wong2018kernel} simulation exhibit a similar pattern.
See \Cref{sec:empirical_supp} for details.

\section{Summary and remarks}
\glsresetall

Minimal dispersion approximately balancing weights, abbreviated as
\emph{minimal weights}, are the weights of minimal dispersion that
approximately balance covariates. In this paper, we study the class of
minimal weights from theoretical and practical standpoints. From a
theoretical standpoint, we show that under standard technical
assumptions minimal weights are consistent estimates of the true
inverse probability weights. Also, we show that the resulting minimal
weights linear estimator is consistent, asymptotically normal, and
semiparametrically efficient. From a practical standpoint, we derive
an oracle inequality that bounds the loss incurred by balancing too
many functions of the covariates in finite samples. Also, we propose a
tuning algorithm to select the degree of approximate balance in
minimal weights, which can be of independent interest. Finally, we
show that approximate balance is preferable to exact balance in
empirical studies, especially when there is limited overlap in
covariate distributions.

The theoretical results developed in this work can be extended to
matching, where covariates are balanced approximately but with weights
that encode an assignment between matched units (e.g.,
\citealt{rubin1973matching,rosenbaum1989optimal,
hansen2004full,abadie2006large,zubizarreta2012using,diamond2013genetic}).
The tuning algorithm used to select the degree of approximate balance
can also be extended to matching. Promising directions for future work
include doubly robust estimation \citep{robins1995semiparametric}
where propensity score modeling weights can be substituted by minimal
weights (see \citet{athey2018approximate} and
\citet{hirshberg2018augmented}). Also, minimal weights can be extended
to instrumental variables and regression discontinuity settings where
model-based inverse probability weights are used for covariate
adjustments.

\section*{Acknowledgment}

We thank the editor, the associate editor, and two anonymous reviewers
for their insightful comments. We thank David Blei, Zach Branson,
Xinkun Nie, Stefan Wager, Anna Zink, and Qingyuan Zhao for their
valuable feedback on our manuscript. We also thank the support from
the Alfred P. Sloan Foundation.

\bibliographystyle{apa}
\bibliography{mybibliography5}

\newpage

\appendix
\section*{Supplementary materials}
\section{Proof for the unconstrained dual formulation}
\label{sec:dualproof}

\subsection*{Proof of Theorem \ref{thm:shrink}}

\begin{proof}
We first present a vanilla form of the dual.
\begin{lemma}
\label{lemma:vanilladual}
The dual of the optimization problem (\ref{eq:covbalineq}) is 
\begin{equation*}   
\begin{aligned}
& \underset{\lambda}{\text{minimize}}
&& l(\lambda) \\
& \text{subject to}
&& \lambda \geq 0 \\
\end{aligned}
\end{equation*}
where 
\begin{equation*}
l(\lambda) = \frac{1}{n}\sum^n_{j=1} \{-Z_j n\rho(Q_j^\top \lambda) 
    + Q_j^\top \lambda\} + \lambda^\top d,
\end{equation*}
\begin{align*}
A_{K \times n} & =
    \left( \begin{array}{cccc}
    B_1(X_1) & B_1(X_2) & \ldots & B_1(X_n) \\
    \vdots & \vdots & \vdots & \vdots \\
    B_K(X_1) & B_K(X_2) & \ldots & B_K(X_n) \\ 
    \end{array} \right)_{K \times n},
\end{align*}
\begin{equation*}
Q_{2K \times n} =  \left( \begin{array}{r}
    A_{K \times n}\\
    -A_{K \times n}\\
    \end{array} \right)_{2K \times n},
\end{equation*}
and
\begin{equation*}
d_{2K \times 1} = \left( \begin{array}{r}
    \delta_{K \times 1}\\
    \delta_{K \times 1}\\
    \end{array} \right)_{2K \times 1}.
\end{equation*}
\end{lemma}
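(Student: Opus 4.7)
The plan is to derive the dual of the primal covariate‑balancing problem by standard Lagrangian duality, with the two extra twists of (i) splitting the absolute value constraints into two one‑sided linear inequalities and (ii) passing from the convex conjugate of $f$ to the function $\rho$ defined in \Cref{thm:shrink}.

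First I would rewrite each constraint $|\sum_i w_i Z_i B_k(X_i) - v_k|\leq \delta_k$, where $v_k=\tfrac1n\sum_i B_k(X_i)$, as the pair
$$\sum_i w_i Z_i B_k(X_i) - v_k \leq \delta_k, \qquad -\sum_i w_i Z_i B_k(X_i)+v_k \leq \delta_k.$$
Stacking the $2K$ coefficient rows gives exactly the matrix $Q$ and stacking the two copies of $\delta$ gives the vector $d$ in the lemma. Introducing nonnegative multipliers $\alpha,\beta\in\mathbb{R}^K_{\geq 0}$ for the two blocks and letting $\lambda=(\alpha^\top,\beta^\top)^\top$, the Lagrangian becomes
$$\mathcal{L}(w,\lambda)=\sum_i Z_i f(w_i) + \sum_j Z_j w_j (Q^\top\lambda)_j - \tilde{v}^\top\lambda - d^\top\lambda,$$
with $\tilde{v}=(v^\top,-v^\top)^\top$.

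Next I would minimize $\mathcal{L}$ in $w$. Units with $Z_i=0$ drop out of both the objective and the linear term (since $w_i$ enters only through $Z_i w_i$ in the constraint). For each $i$ with $Z_i=1$, stationarity gives $w_i^\star=(f')^{-1}\bigl(-(Q^\top\lambda)_i\bigr)$, and the pointwise Legendre identity $\min_w[f(w)+wt] = -f^*(-t)$ yields
$$g(\lambda) := \min_w \mathcal{L}(w,\lambda) = -\sum_j Z_j f^*\bigl(-(Q^\top\lambda)_j\bigr) - \tilde{v}^\top\lambda - d^\top\lambda.$$
A short computation using $v_k=\tfrac1n\sum_j B_k(X_j)$ shows $\tilde v^\top\lambda = \tfrac1n\sum_j (Q^\top\lambda)_j$. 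Hence $-g(\lambda)$ already has the form stated in the lemma, provided I can identify $f^*(-t)$ with $-\rho(t)$.

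The key and only non‑mechanical step is therefore to verify the identity $f^*(-t) = -\rho(t)$, which is where the definition $h(x)=f(1/n-x)$ and
$\rho(t)=t/n - t(h')^{-1}(t) + h((h')^{-1}(t))$
enters. From $h'(x)=-f'(1/n-x)$ one gets $(h')^{-1}(t)=1/n-(f')^{-1}(-t)$ and therefore $h((h')^{-1}(t))=f((f')^{-1}(-t))$. Plugging these into $\rho$ and cancelling the $t/n$ terms gives
$$-\rho(t) = -t(f')^{-1}(-t) - f((f')^{-1}(-t)),$$
which is precisely the explicit formula for $f^*(-t)$ obtained from its defining supremum at the stationary point $u=(f')^{-1}(-t)$. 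Substituting this identification into $-g(\lambda)$ recovers $l(\lambda)=\tfrac{1}{n}\sum_j\{-Z_j n\rho((Q^\top\lambda)_j) + (Q^\top\lambda)_j\} + \lambda^\top d$, and convexity of the primal (convexity of $f$ plus linear constraints) yields strong duality, so $\min_w$ of the primal equals $\min_{\lambda\geq 0} l(\lambda)$.

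The main obstacle is really only bookkeeping: tracking the signs in the Legendre transform and confirming the $\rho\leftrightarrow f^*$ correspondence, together with checking that the change of variables $h(x)=f(1/n-x)$ is well defined on the effective domain (i.e., that $(h')^{-1}$ and $(f')^{-1}$ exist on the range of $-Q^\top\lambda$), which follows from strict convexity of $f$ assumed implicitly in Problem~(\ref{eq:covbalineq}). No additional regularity is needed beyond what is already used to invoke strong duality for convex programs with linear constraints.
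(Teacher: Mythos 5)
Your proposal is correct and follows essentially the same route as the paper: Lagrangian/Fenchel duality for the separable convex program after splitting each absolute-value balance constraint into two one-sided linear inequalities, with the conjugate evaluated at its stationary point. The only cosmetic difference is that the paper first changes variables to $s_i = 1/n - Z_i w_i$ and conjugates $h(x)=f(1/n-x)$ (invoking the Tseng--Bertsekas dual form), whereas you conjugate $f$ directly and then verify the bridge identity $f^*(-t)=-\rho(t)$; the two computations are identical.
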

We prove this lemma towards the end of this section.

We then write $\lambda_{2K \times 1}  = \left( \begin{array}{r}
    \lambda_{+, K \times 1}\\
    \lambda_{-, K \times 1}\\
    \end{array} \right)_{2K \times 1}$. We have
\begin{align*}
l(\lambda) & = \frac{1}{n}\sum^n_{j=1} \{-Z_j n\rho(A_j^\top \lambda_+ 
    - A_j^\top \lambda_-) 
    + (A_j^\top \lambda_+ - A_j^\top \lambda_-) \} 
    + \lambda_+^\top \delta + \lambda_-^\top \delta\\
& = \frac{1}{n}\sum^n_{j=1} [ -Z_j n\rho \{A_j^\top (\lambda_+ -\lambda_-) \} 
    + A_j^\top (\lambda_+ - \lambda_-) ] 
    + (\lambda_+^\top + \lambda_-^\top) \delta.
\end{align*}
Suppose the optimizer is $\lambda^\dagger_{2K \times 1}  
    = \left( \begin{array}{r}
    \lambda^\dagger_{+, K \times 1}\\
    \lambda^\dagger_{-, K \times 1}\\
    \end{array} \right)_{2K \times 1}$. 
We claim that $\lambda^\dagger_{+, k} \cdot \lambda^\dagger_{-, k} = 0, 
k = 1, \ldots, K$, where the index $k$ points to the $k$th entry of a vector.

We prove this claim by contradiction. Suppose the opposite. If $\lambda^\dagger_{+, k} > 0 $ and 
$\lambda^\dagger_{-, k} > 0 $ for some $k$, then
$$
\lambda^{\dagger\dagger\top} = [\lambda^\dagger_+ 
    - \{0, \ldots, 0, \min(\lambda^\dagger_{+, k}, \lambda^\dagger_{-, k}), 0, \ldots, 0\},
    \lambda^\dagger_- 
    - \{ 0, \ldots, 0, \min(\lambda^\dagger_{+, k}, \lambda^\dagger_{-, k}), 0, \ldots, 0)\} ]
$$    
has
\[l(\lambda^{\dagger\dagger}) = l(\lambda^{\dagger}) - 2 \min(\lambda^\dagger_{+, k}, \lambda^\dagger_{-, k}) \cdot \delta
    < l(\lambda^{\dagger})\]
by $\delta > 0$ and $\min(\lambda^\dagger_{+, k}, \lambda^\dagger_{-, k}) > 0$.
This contradicts the fact that $\lambda^\dagger$ is the optimizer.
Theorem \ref{thm:shrink} then follows by rewriting $ \lambda_+ - \lambda_-$ as $\lambda$ and deducing $\lambda_+ + \lambda_- = |\lambda|$ from $\lambda^\dagger_{+, k} \cdot \lambda^\dagger_{-, k} = 0, 
k = 1, \ldots, K$.
\end{proof}

\subsection*{Proof of Lemma \ref{lemma:vanilladual}}

\begin{proof}

Rewriting problem (\ref{eq:covbalineq}) in matrix notation,
\begin{equation*}   
\begin{aligned}
& \underset{\boldsymbol{w}}{\text{minimize}}
&& \sum^n_{i=1} Z_i h(s_i) \\
& \text{subject to}
&& Q_{2K \times n} s_{n \times 1} \leq d_{2K \times 1} \\
\end{aligned}
\end{equation*}
where
\begin{align*}
s_{n \times 1} & = (s_i)_{n \times 1} 
    = (\frac{1}{n} - Z_i w_i)_{n \times 1},\\
A_{K \times n} & =
    \left( \begin{array}{cccc}
    B_1(X_1) & B_1(X_2) & \ldots & B_1(X_n) \\
    \vdots & \vdots & \vdots & \vdots \\
    B_K(X_1) & B_K(X_2) & \ldots & B_K(X_n) \\ 
    \end{array} \right)_{K \times n},
Q_{2K \times n} & =  \left( \begin{array}{r}
    A_{K \times n}\\
    -A_{K \times n}\\
    \end{array} \right)_{2K \times n},\\
d_{2K \times 1} & = \left( \begin{array}{r}
    \delta_{K \times 1}\\
    \delta_{K \times 1}\\
    \end{array} \right)_{2K \times 1}.\\
\end{align*}
Again as special cases, stable balancing weights have $h(x) =
(\frac{1}{n}-\frac{1}{r}-x)^2$ and entropy balancing has $h(x) =
(\frac{1}{n} - x)\log(\frac{1}{n} - x)$.

The problem is now in the form of \citet{tseng1987relaxation} and
\citet{tseng1991relaxation}.

The dual of this problem is 
\begin{equation*}   
\begin{aligned}
& \underset{\lambda}{\text{maximize}}
&& g(\lambda) \\
& \text{subject to}
&& \lambda \geq 0, \\
\end{aligned}
\end{equation*}
where $g(\lambda) = -\sum^n_{j=1} h^*_j(Q_j^\top \lambda) - <\lambda, d>,$
and $h_j^*(\cdot)$ is the convex conjugate of $Z_j h(\cdot).$
\begin{align*}
h^*_j(t) & = \sup_{s_j} \{t s_j - Z_j h(s_j)\}\\
& = \sup_{w_j} \{-t Z_j w_j + \frac{t}{n} - Z_j h(\frac{1}{n} - Z_j w_j)\}\\
& = \sup_{w_j} \{-t Z_j w_j + \frac{t}{n} - Z_j h(\frac{1}{n} - w_j)\}\\
& = -t Z_j w^*_j + \frac{t}{n} - Z_j h(\frac{1}{n} - w^*_j),
\end{align*}
where $w^*_j$ satisfies the first order condition
\begin{align*}
& -t Z_j + Z_j h'(\frac{1}{n} - w^*_j) = 0,\\
\Rightarrow & h'(\frac{1}{n} - w^*_j) = t,\\
\Rightarrow & w^*_j = \frac{1}{n} - (h')^{-1}(t).
\end{align*}
Therefore,
\begin{align*}
h^*_j(t) & = -t Z_j \frac{1}{n} + t Z_j (h')^{-1}(t) + \frac{t}{n} 
        - Z_j h \{ (h')^{-1}(t) \},\\
    & = -Z_j  [ \frac{t}{n} - t (h')^{-1}(t) + h\{(h')^{-1}(t) \} ]
        + \frac{t}{n}.
\end{align*}

Denote $\rho(\cdot)$ as 
\[\rho(t) = \frac{t}{n} - t (h')^{-1}(t) + h \{ (h')^{-1}(t) \}.\] 

This gives \[h^*_j(t) = -Z_j \rho(t) + \frac{t}{n}.\]

Also we notice that
\begin{align*}
\rho'(t) & = \frac{1}{n} - (h')^{-1}(t) - t \{ (h')^{-1}(t) \}' 
        + h'\{(h')^{-1}(t)\}\cdot \{(h')^{-1}(t)\}'\\
    & = \frac{1}{n} - (h')^{-1}(t) - t \{ (h')^{-1}(t) \}' + t \{ (h')^{-1}(t)\}'\\
    & = \frac{1}{n} - (h')^{-1}(t).\\
\end{align*}
This implies \[w^* = \rho'(t).\]

The dual formulation thus becomes
\begin{equation*}   
\begin{aligned}
& \underset{\lambda}{\text{minimize}}
&& l(\lambda) \\
& \text{subject to}
&& \lambda \geq 0 \\
\end{aligned}
\end{equation*}
where 
\begin{align*}
l(\lambda) = \frac{1}{n}\sum^n_{j=1} \{-Z_j n\rho(Q_j^\top \lambda) 
    + Q_j^\top \lambda\} + \lambda^\top d.
\end{align*}
\end{proof}

\section{Proof of the Asymptotic properties}
\label{sec:asymptotics_proof}

\subsection*{Proof of Theorem \ref{thm:wtconsistency}}

\begin{proof}
The proof utilizes the Bernstein's inequality as in
\citet{fan2016improving}.

We first prove the following lemma.
\begin{lemma}
\label{lemma:wtconsistlemma}
There exists a global minimizer $\lambda^\dagger$ such that
\[||\lambda^\dagger  - \lambda^*_1||_2 = O_p(K^{1/2}(\log K)/n + K^{1/2-r_\pi}).\]
\end{lemma}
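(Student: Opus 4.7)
The plan is to treat $\lambda^\dagger$ as the minimizer of the convex empirical objective $L_n(\lambda) := n^{-1}\sum_{j=1}^n[-Z_j n\rho\{B(X_j)^\top\lambda\} + B(X_j)^\top\lambda] + |\lambda|^\top\delta$ and to compare it against the oracle $\lambda_1^*$ from Assumption \ref{assumption:wtconsistency}.6 via the standard M-estimation recipe: a basic inequality, local strong convexity of the population risk, Bernstein-type empirical-process concentration, and absorption of the $\ell_1$ penalty. Letting $L(\lambda) = E[L_n(\lambda) - |\lambda|^\top\delta]$, the optimality of $\lambda^\dagger$ yields
\[ L(\lambda^\dagger) - L(\lambda_1^*) \leq [(L_n-L)(\lambda_1^*) - (L_n-L)(\lambda^\dagger)] + (|\lambda_1^*| - |\lambda^\dagger|)^\top\delta, \]
and a first-order expansion of the empirical-process increment about $\lambda_1^*$ produces the leading stochastic term $(\lambda^\dagger - \lambda_1^*)^\top G_n$, where $G_n := n^{-1}\sum_j [1 - Z_j n\rho'\{B(X_j)^\top \lambda_1^*\}] B(X_j)$ is mean-zero up to an $O(K^{-r_\pi})$ bias inherited from Assumption \ref{assumption:wtconsistency}.6.

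For the deterministic side, Assumption \ref{assumption:wtconsistency}.3 ($-n\rho''$ uniformly bounded below by $|c_2| > 0$ on the relevant neighborhood) together with the Gram-matrix bound in Assumption \ref{assumption:wtconsistency}.4 delivers positive-definiteness of $\nabla^2 L$ in a neighborhood of the population minimizer $\lambda^o$, yielding the quadratic minorant $L(\lambda^\dagger) - L(\lambda_1^*) \gtrsim \|\lambda^\dagger - \lambda_1^*\|_2^2 - O(K^{1-2r_\pi})$, where the slack absorbs the approximation gap $\|\lambda^o - \lambda_1^*\|_2 = O(K^{1/2-r_\pi})$ forced by the basis expansion in Assumption \ref{assumption:wtconsistency}.6. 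For the stochastic side, I would apply Bernstein's inequality coordinatewise to $G_{n,k}$, using the envelope $|B_k| \leq CK^{1/2}$ and variance $E[B_k(X)^2] = O(1)$ from Assumption \ref{assumption:wtconsistency}.4 together with $|n\rho'| \leq 1-c_0$ from Assumption \ref{assumption:wtconsistency}.3, and then take a union bound over $k = 1, \ldots, K$. The $\ell_1$ penalty $(|\lambda_1^*| - |\lambda^\dagger|)^\top\delta$ is absorbed via Cauchy--Schwarz together with Assumption \ref{assumption:wtconsistency}.7's bound $\|\delta\|_2 = O_p(K^{1/2}(\log K)/n + K^{1/2-r_\pi})$. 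Combining the upper bound on the stochastic and penalty terms with the quadratic minorant and solving the resulting inequality for $\|\lambda^\dagger - \lambda_1^*\|_2$ yields the stated rate $O_p(K^{1/2}(\log K)/n + K^{1/2-r_\pi})$.

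The main obstacle is that Assumption \ref{assumption:wtconsistency}.3 only asserts strong convexity of $L$ in the local neighborhood $\mathcal{B}$ of $v^* = B(x)^\top \lambda^\dagger$, so I cannot invoke the quadratic minorant directly. My plan to handle this is first to establish crude consistency $\|\lambda^\dagger - \lambda_1^*\|_2 = o_p(1)$ using compactness of $\Theta$ (Assumption \ref{assumption:wtconsistency}.2), uniqueness of $\lambda^o$ (Assumption \ref{assumption:wtconsistency}.1), and uniform convergence of $L_n$ to $L$ over $\Theta$ granted by $K = o(n)$ (Assumption \ref{assumption:wtconsistency}.5), and then run a standard localization/peeling argument that confines $\lambda^\dagger$ to $\mathcal{B}$ with probability tending to one, at which point the quadratic minorant applies. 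A secondary technical difficulty is the joint management of the three remainder contributions---the linearization error in the empirical process, the approximation bias from Assumption \ref{assumption:wtconsistency}.6, and the $\ell_1$ penalty---so that each remains below the leading rate while $K$ is allowed to grow with $n$.
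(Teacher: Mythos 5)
Your plan follows the same route as the paper's proof: expand the penalized dual objective around $\lambda_1^*$, split the score $G_n=n^{-1}\sum_j\{1-Z_jn\rho'(B(X_j)^\top\lambda_1^*)\}B(X_j)$ into a mean-zero part plus an $O(K^{1/2-r_\pi})$ bias from Assumption \ref{assumption:wtconsistency}.6, use the curvature from Assumption \ref{assumption:wtconsistency}.3, and absorb the $\ell_1$ penalty with Cauchy--Schwarz and Assumption \ref{assumption:wtconsistency}.7. The paper Taylor-expands the empirical objective directly via the mean value theorem rather than separating population risk from an empirical-process increment, and it treats localization far more casually than your consistency-plus-peeling plan (it simply invokes convexity to reduce to the boundary of the candidate ball); these are presentational differences, and your handling of the neighborhood $\mathcal{B}$ is if anything more careful than the paper's.

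The one step that would fail as written is the concentration bound. Coordinatewise Bernstein with a union bound controls $\|G_n\|_\infty$ at rate $O_p\{(\log K/n)^{1/2}+K^{1/2}(\log K)/n\}$, and passing to $\|G_n\|_2\leq K^{1/2}\|G_n\|_\infty$ then gives $O_p\{(K\log K/n)^{1/2}+K(\log K)/n\}$, which exceeds the target term $K^{1/2}(\log K)/n$ by a factor of order $(n/\log K)^{1/2}$ in the regime $K\log K=o(n)$; solving your final inequality would therefore yield a strictly worse rate than the lemma claims. The paper instead applies the matrix Bernstein inequality of \citet{tropp2015introduction} directly to the $\ell_2$ norm of the vector-valued sum, using the operator-norm variance proxy $\|\sum_jE(\xi_j\xi_j^\top)\|_2=O(1/n)$ rather than the sum of $K$ coordinatewise variances, and this is what produces the tail $(K+1)\exp\{-(t^2/2)/(C/n+CK^{1/2}t/n)\}$ and hence $t\asymp K^{1/2}(\log K)/n$. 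So you must replace the coordinatewise argument by a vector-valued inequality that sees the operator-norm variance. (Be aware that the matrix Bernstein variance parameter is the maximum of $\|\sum_jE(\xi_j\xi_j^\top)\|_2$ and $\sum_jE\|\xi_j\|_2^2$, and the latter is $O(K/n)$ here---a point the paper's own computation passes over---but matching the paper's stated rate requires at minimum the operator-norm route rather than the union bound.)
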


\begin{proof}

Write $A_j = B(X_j) = \{ B_1(X_j), ..., B_K(X_j) \}$. Recall that the optimization objective is 
\[G(\lambda) := \frac{1}{n}\sum^n_{j=1} \{ -Z_j n\rho(A_j^\top \lambda) 
    + A_j^\top \lambda \} + |\lambda|^\top \delta,\]
    where $G(\cdot)$ is convex in $\lambda$ by the concavity of $\rho(\cdot).$
To show that a minimizer $\Delta^*$ of $G(\lambda^*_1 + \Delta)$ exists in $\mathcal{C} = \{\Delta\in\mathbb{R}^K:||\Delta||_2\leq  C K^{1/2}(\log K)/n + K^{1/2-r_\pi}\}$ for some constant $C$, it suffices to show that
\[E\{\inf_{\Delta\in\mathcal{C}}G(\lambda^*_1 + \Delta) - G(\lambda^*_1) > 0\}\rightarrow 1, \text{  as  } n\rightarrow\infty, (*)\]
by the continuity of $G(\cdot).$

To show $(*)$, we use mean value theorem: for some $\tilde{\lambda}$ between $\lambda^\dagger$ and $\lambda_1^*$,
\begin{align*}
&G(\lambda^*_1 + \Delta) - G(\lambda^*_1)\\
\geq & \Delta\cdot\frac{1}{n}\sum^n_{j=1} \{-Z_j n\rho'(A_j^\top \lambda^*_1) A_j
    + A_j\} + \frac{1}{2}\Delta^\top \cdot\{\sum^n_{j=1} -Z_j \rho''(A_j^\top \tilde{\lambda}) A_j^\top A_j\}\cdot\Delta - |\Delta|^\top \delta\\
\geq& -||\Delta||_2 \cdot ||\frac{1}{n}\sum^n_{j=1} -Z_j n\rho'(A_j^\top \lambda^*_1) A_j
    + A_j||_2 \nonumber\\
    &+ \frac{1}{2}\Delta^\top \cdot\{\sum^n_{j=1} -Z_j \rho''(A_j^\top \tilde{\lambda}) A_j^\top A_j\}\cdot\Delta - ||\Delta||_2||\delta||_2\\
\geq & -||\Delta||_2 \cdot ||\frac{1}{n}\sum^n_{j=1} -Z_j n\rho'(A_j^\top \lambda^*_1) A_j
    + A_j||_2 - ||\Delta||_2||\delta||_2.
\end{align*}

The first inequality is due to the triangle inequality,
$|\lambda_1^*+\Delta| - |\lambda_1^*|\geq -|\Delta|$. The second
inequality follows from Cauchy-Schwarz inequality. The third
inequality is due to the positivity of $\frac{1}{2}\Delta^\top \cdot
\{ \sum^n_{j=1} -Z_j \rho''(A_j^\top \tilde{\lambda}) A_j^\top A_j \}
\cdot\Delta$ by Assumption \ref{assumption:wtconsistency}.3.

Next we notice that
\begin{align*}
& ||\frac{1}{n}\sum^n_{j=1} \{-Z_j n\rho'(A_j^\top \lambda^*_1) A_j + A_j \} ||_2\\
\leq & ||\frac{1}{n}\sum^n_{j=1} ( -Z_j \frac{1}{\pi_j} A_j + A_j ) ||_2 + ||\frac{1}{n}\sum^n_{j=1} -Z_j \{ \frac{1}{\pi_j} - n\rho'(A_j^\top \lambda^*_1) \} A_j||_2\\
\leq & ||\frac{1}{n}\sum^n_{j=1} (1-\frac{Z_j}{\pi_j}) A_j||_2 + \frac{1}{n}\sum_{j=1}^n||A_j||_2O(K^{-r_\pi}).
\end{align*}
The first inequality is due to the triangle inequality. The second inequality is due to Assumption \ref{assumption:wtconsistency}.3 and \ref{assumption:wtconsistency}.6.

We first use the Bernstein's inequality to bound both terms.

Recall that the Bernstein's inequality for random matrices in \citet{tropp2015introduction} says the following. Let $\{Z_k\}$ be a sequence of independent random matrices with dimensions $d_1\times d_2$. Assume that $EZ_k = 0$ and $||Z_k||_2\leq R_n$ almost surely. Define 
\[\sigma_n^2 = \max\{||\sum^n_{k=1}E(Z_kZ_k^\top)||_2, ||\sum^n_{k=1}E(Z_k^\top Z_k)||_2\}.\]
Then for all $t\geq 0$,
\[pr(||\sum^n_{k=1}Z_k||_2\geq t)\leq (d_1+d_2)\exp(-\frac{t^2/2}{\sigma^2_n+R_nt/3}).\]

For the first term $||\frac{1}{n}\sum^n_{j=1} (1-\frac{Z_j}{\pi_j}) A_j||_2$, we notice that 
\begin{align}
\label{eq:bern11}
E\{\frac{1}{n}(1-\frac{Z_j}{\pi_j}) A_j\} = E[E\{\frac{1}{n}(1-\frac{Z_j}{\pi_j}) A_j\mid X_j\}] = 0.
\end{align}
The last equality is because $E(Z_j) =\pi_j. $

Then for $||\frac{1}{n}\sum^n_{j=1} (1-Z_j/\pi_j) A_j||_2$, we have
\begin{align}
&||\frac{1}{n} (1-\frac{Z_j}{\pi_j}) A_j||_2\nonumber\\
\leq &\frac{1}{n}|| (1-\frac{Z_j}{\pi_j})||_2||A_j||_2\nonumber\\
\leq &\frac{1}{n} (\frac{1 - \pi_j}{\pi_j}) CK^{1/2}\nonumber\\
= &\frac{1}{n} \{ n\rho'(A_j^\top\lambda_1^*) - 1 \} CK^{1/2}\nonumber\\
\leq & C'\frac{K^{1/2}}{n}.
\label{eq:bern12}
\end{align}

The first inequality is due to Cauchy-Schwarz inequality. The second inequality is due to Assumption \ref{assumption:wtconsistency}.4 and $E(1-Z_j/\pi_j)^2 = \text{var}(1-Z_j/\pi_j) = \pi_j(1-\pi_j)/\pi_j^2 =(1 - \pi_j)/\pi_j$. The third equality is due to $\pi_j = \{n\rho'(A_j^\top\lambda_1^*)\}^{-1}.$ The fourth inequality is due to Assumption \ref{assumption:wtconsistency}.3.

Finally, for $||\sum^n_{k=1}E \{ \frac{1}{n^2}(1-\frac{Z_j}{\pi_j} )^2A_jA_j^\top\}||_2$, we have
\begin{align}
&||\sum^n_{k=1}E\{\frac{1}{n^2}(1-\frac{Z_j}{\pi_j})^2A_jA_j^\top\}||_2\nonumber\\
\leq & \frac{1}{n}\sup_j(1-\frac{Z_j}{\pi_j})^2||E(A_jA_j^\top)||_2\nonumber\\
\leq & \frac{C''}{n}.
\label{eq:bern13}
\end{align}
The first inequality is taking the sup over $(1-\frac{Z_j}{\pi_j})^2$. The second inequality is due to Assumption \ref{assumption:wtconsistency}.3, \ref{assumption:wtconsistency}.4, and $\pi_j = \{n\rho'(A_j^\top\lambda_1^*)\}^{-1}.$

\Cref{eq:bern11}, \Cref{eq:bern12}, and \Cref{eq:bern13}, together with the Bernstein's inequality, imply
\begin{align*}
\textrm{pr} \{ ||\frac{1}{n}\sum^n_{j=1} (1-\frac{Z_j}{\pi_j}) A_j||_2\geq t \} \leq (K+1)\exp(-\frac{t^2/2}{\frac{C''}{n} + C'\frac{K^{1/2}}{n}\cdot t/3}).
\end{align*}
The right side goes to zero as $K\rightarrow\infty$ when \[\frac{t^2/2}{\frac{C''}{n} + C'\frac{K^{1/2}}{n}\cdot t/3}\geq\log K.\] It suffices when $t = O_p \{ K^{1/2}(\log K)/n \}.$ 

Therefore, we have
\begin{align}
\label{eq:bern1}
||\frac{1}{n}\sum^n_{j=1} (1-\frac{Z_j}{\pi_j}) A_j||_2 = O_p\{K^{1/2}(\log K)/n\}.
\end{align}

Now we work on the second term $\frac{1}{n}\sum_{j=1}^n||A_j||_2O(K^{-r_\pi})$. We have
\begin{align}
\label{eq:bern2}
\frac{1}{n}\sum_{j=1}^n||A_j||_2O(K^{-r_\pi})\leq CK^{1/2-r_\pi}.
\end{align}
This inequality is due to Assumption \ref{assumption:wtconsistency}.4.

Combining \Cref{eq:bern1}, \Cref{eq:bern2}, and Assumption \ref{assumption:wtconsistency}.7, we have 
\begin{align*}
&G(\lambda^*_1 + \Delta) - G(\lambda^*_1)\\
=&-||\Delta||_2 \cdot O_p ( \frac{K^{1/2}\log K}{n} + K^{1/2-r_\pi}) + \frac{1}{2}||\Delta||^2_2||\delta||_2\\
\geq & 0
\end{align*}
for $\Delta = C\frac{K^{1/2}\log K}{n} + K^{1/2-r_\pi}$ with large enough constant $C > 0.$ 

$(*)$ is thus proved.
\end{proof}

Now we prove \Cref{thm:wtconsistency}.

\begin{align*}
&\sup_{x\in\mathcal{X}}|nw^*(x) - \frac{1}{\pi(x)}|\\
= & \sup_{x\in\mathcal{X}} |n\rho'\{ B(x)^\top\lambda^\dagger \}  - n\rho'\{ m^*(x) \}|\\
\leq & \sup_{x\in\mathcal{X}} |n\rho'\{ B(x)^\top\lambda^\dagger \}  - n\rho'\{B(x)^\top\lambda^*_1\}| + \sup_{x\in\mathcal{X}} |n\rho'\{B(x)^\top\lambda^*_1\}  - n\rho'\{ m^*(x) \}|\\
= & O\{ \sup_{x\in\mathcal{X}} |B(x)^\top\lambda^\dagger  - B(x)^\top\lambda^*_1| \} + O(K^{-r_\pi}) \\
\leq & O \{ \sup_{x\in\mathcal{X}} ||B(x)||_2||\lambda^\dagger  - \lambda^*_1||_2 \} + O(K^{-r_\pi}) \\
= & O_p \{ K(\frac{\log K}{n} + K^{-r_\pi}) \} + O(K^{-r_\pi}) \\
= & O_p ( \frac{K\log K}{n} + K^{1-r_\pi})  \\
= & o_p(1)
\end{align*}

The first equality rewrites $\pi(x) =
\{n\rho'(B(x)^\top\lambda_1^*)\}^{-1}.$ The second inequality is due
to the triangle inequality. The third inequality is due to Assumptions
\ref{assumption:wtconsistency}.3 and \ref{assumption:wtconsistency}.6.
The fourth inequality is due to the Cauchy-Schwarz inequality. The
fifth equality is due to \Cref{lemma:wtconsistlemma} and Assumption
\ref{assumption:wtconsistency}.4. The sixth equality holds because the
first term dominates the second. The seventh equality is due to
Assumptions \ref{assumption:wtconsistency}.5 and
\ref{assumption:wtconsistency}.6.

Also, we have

\begin{align*}
&||nw^*(x)-\frac{1}{\pi(x)}||_{P,2}\\
=&||n\rho'\{\lambda^{\dagger\top}B(X)\}-\frac{1}{\pi(x)}||_{P,2}\\
\lesssim & ||n\rho'\{\lambda^{\dagger\top}B(X)\}-n\rho' \{ \lambda_1^{*\top}B(X) \}||_{P,2} + ||\frac{1}{\pi(x)}-n\rho' \{ \lambda_1^{*\top}B(X) \}||_{P,2}\\
\lesssim &||(\lambda^{\dagger}-\lambda_1^*)^\top B(X)||_{P,2}+ \sup_{x\in\mathcal{X}}|m^*(x)-\lambda_1^{*\top}B(x)|\\
= & O_p \{ K^{1/2}(\frac{\log K}{n} + K^{-r_\pi}) \} + O(K^{-r_\pi}) \\
= & O_p(\frac{K^{1/2}\log K}{n} + K^{1/2-r_\pi})\\
= & o_p(1)
\end{align*}

The first equality rewrites $\pi(x) =
[n\rho'\{B(x)^\top\lambda_1^*\}]^{-1}.$ The second inequality is due
to the triangle inequality. The third inequality is due to Assumption
\ref{assumption:wtconsistency}.3. The fourth inequality is due to
\Cref{lemma:wtconsistlemma}, Assumption
\ref{assumption:wtconsistency}.4 and Assumption
\ref{assumption:wtconsistency}.6. The fifth equality is due to the
first term dominates the second. The sixth equality is due to
Assumption \ref{assumption:wtconsistency}.5 and Assumption
\ref{assumption:wtconsistency}.6.

\end{proof}

\subsection*{Proof of Theorem \ref{thm:asymptotics}}

\begin{proof}

The proof utilizes empirical processes techniques as in
\citet{fan2016improving}.

We first decompose $\hat{Y}_{w^*} - \bar{Y}$ into several residual terms.
\begin{align*}
\hat{Y}_{w^*} - \bar{Y} & = \sum^n_{i=1}  Z_i w^*_i Y_i - \bar{Y}\\
& = \sum^n_{i=1} Z_i w^*_i \{Y_i - Y(X_i)\} 
    + \sum^n_{i=1} (Z_i w^*_i - \frac{1}{n} ) Y(X_i)
    + \{\frac{1}{n} \sum^n_{i=1} Y(X_i) - \bar{Y}\}\\
\begin{split} & = \sum^n_{i=1} Z_i w^*_i \{Y_i - Y(X_i)\} 
    + \sum^n_{i=1} (Z_i w^*_i - \frac{1}{n} ) 
        \{ Y(X_i) - \lambda_2^{*\top} B(X_i) \}\\
    & + \sum^n_{i=1} (Z_i w^*_i - \frac{1}{n} ) \lambda_2^{*\top} B(X_i)
      +  \{ \frac{1}{n} \sum^n_{i=1} Y(X_i) - \bar{Y} \} \end{split}\\
& = \frac{1}{n} \sum^n_{i=1}S_i + R_0 + R_1 + R_2,
\end{align*}
where
\begin{align*}
S_i & = \frac{Z_i}{\pi_i} \{Y_i - Y(X_i)\} + \{Y(X_i) - \bar{Y}\},\\
R_0 & = \sum^n_{i=1} (w^*_i - \frac{1}{n\pi_i}) Z_i
        \{Y_i - Y(X_i)\},\\
R_1 & = \sum^n_{i=1} (Z_i w^*_i - \frac{1}{n})
        \{ Y(X_i) - \lambda_2^{*\top} B(X_i) \},\\
R_2 & = \sum^n_{i=1} (Z_i w^*_i - \frac{1}{n} )
        \{ \lambda_2^{*\top} B(X_i) \}.\\
\end{align*}
Below we show $R_j = o_p(n^{-1/2}), 0 \leq j \leq 2$. The conclusion
follows from $S_i$ taking the same form as the efficient score
\citep{hahn1998role}. $\hat{Y}_{w^*}$ is thus asymptotically normal
and semiparametrically efficient.

We first study
$R_0 =\sum^n_{i=1} (nw^*_i - 1/\pi_i) Z_i
        \{Y_i - Y(X_i)\}/n$.
Consider an empirical process 
$\mathbb{G}_n(f_0) = n^{1/2} [\sum^n_{i=1} f_0(Z_i, Y_i, X_i)/n -E\{ f_0(Z, Y, X)\}]$,
where 
\[f_0(Z, Y, X) = Z \{Y - Y(X)\}\left[n\rho'\{m(X)\} - \frac{1}{\pi(x)}\right].\]

By the missing at random assumption, we have $Ef_0\{Z, Y,X\} = 0.$

By \Cref{thm:wtconsistency}, we have 
\[\sup_{x\in\mathcal{X}}|\rho'\{B(x)^\top\lambda^{\dagger}\} - \frac{1}{n\pi(x)}| =  O_p(\frac{K\log K}{n} + K^{1-r_\pi}) = o_p(1).\]

By Markov's inequality and maximal inequality, we have
\[n^{1/2}R_0\leq\sup_{f_0\in\mathcal{F}}\mathbb{G}_n(f_0)\lesssim E\sup_{f_0\in\mathcal{F}}\mathbb{G}_n(f_0)\lesssim J_{[]}\{||F_0||_{P,2},\mathcal{F},L_2(P)\},\]
where the set of functions is $\mathcal{F}=\{f_0:||m-m^*||_\infty\leq\delta_0\}$, where $||f||_{\infty}=\sup_{x\in\mathcal{X}}|f(x)|$ and $\delta_0=C\{K(\log K)/n + K^{1-r_\pi}\}$ for some constant $C>0$.

The second inequality is due to Markov's inequality.
$J_{[]}\{||F_0||_{P,2},\mathcal{F},L_2(P)\}$ is the bracketing
integral. $F_0:= \delta_0|Y-Y(X)|\gtrsim |f_0(Z, Y, X)|$ is the
envelop function. We also have
$||F_0||_{P,2}=(EF_0^2)^{1/2}\lesssim\delta_0$ by $E|Y-Y(X)|<\infty.$

Next we bound $J_{[]}\{||F_0||_{P,2},\mathcal{F},L_2(P)\}$ by $n_{[]}\{\varepsilon,\mathcal{F}, L_2(P)\}$:
\[J_{[]}\{||F_0||_{P,2},\mathcal{F},L_2(P)\} \lesssim \int^{\delta}_0 [ n_{[]}\{\varepsilon,\mathcal{F}, L_2(P)\} ]^{1/2}\text{d}\varepsilon.\]

Define a new set of functions
$\mathcal{F}_0=\{f_0:||m-m^*||_\infty\leq C\}$ for some constant
$C>0$. Then,
\begin{align*}
\log n_{[]}\{\varepsilon,\mathcal{F}, L_2(P)\}&\lesssim \log n_{[]} \{ \varepsilon,\mathcal{F}_0\delta_0, L_2(P) \} \\
&=\log n_{[]}\{ \varepsilon/\delta_0,\mathcal{F}_0, L_2(P) \}\\
&\lesssim \log n_{[]}\{\varepsilon/\delta_0,\mathcal{M}, L_2(P) \}\\
&\lesssim(\delta_0/\varepsilon)^{(1/k_1)}.
\end{align*}
The first inequality is due to the fact that $\rho'(\cdot)$ bounded
away from 0 and Lipschitz. The last inequality is due to Assumption
\ref{assumption:regularity}.2.

Therefore, we have
\[J_{[]}\{||F_0||_{P,2},\mathcal{F},L_2(P)\}\lesssim\int^{\delta}_0 [ \log n_{[]}\{\varepsilon,\mathcal{F}, L_2(P)\} ]^{1/2}\text{d}\varepsilon\lesssim\int^{\delta}_0(\delta_0/\varepsilon)^{(1/2k_1)}\text{d}\varepsilon.\]

This goes to 0 as $\delta$ goes to 0 by $2k_1>1$ and the integral converges. Thus, this shows that $n^{1/2}R_0=o_p(1)$.

Next, we consider $R_1=\sum^n_{i=1} (nZ_i w^*_i - 1)
        \{Y(X_i) - \lambda_2^{*\top} B(X_i)\}/n$. 
Define the empirical process $\mathbb{G}_n(f_1)=n^{1/2}[\sum^n_{i=1}f_1(Z_i,X_i)/n-E\{f_1(Z,X)\}]$, where $f_1(Z,X)=[nZ\rho'\{m(x)\}-1]\{Y(X)-\lambda_2^{*\top} B(X)\}$. 

Write $\Delta(X) := Y(X)-\lambda_2^{*\top} B(X).$ By Assumption \ref{assumption:regularity}.3, we have $||\Delta||_\infty\lesssim K^{-r_y}$. 

By \Cref{thm:wtconsistency}, we have
\[\left\|n\rho'\{\lambda^{\dagger\top}B(X)\}-\frac{1}{\pi(x)}\right\|_{P,2} = O_p(\frac{K\log K}{n} + K^{1-r_\pi}).\]

Therefore, we have
\begin{align*}
n^{1/2}R_1 &= \mathbb{G}_n(f_1)+n^{1/2} Ef_1(Z,X)\\
&\leq\sup_{f_1\in\mathcal{F}_1}\mathbb{G}_n(f_1)+n^{1/2}\sup_{f_1\in\mathcal{F}_1}Ef_1,
\end{align*}
where $\mathcal{F}_1 = \{f_1:||m-m^*||_{P,2}\leq \delta_1,$$ ||\Delta||_\infty\leq\delta_2\}, $$\delta_1 = C \{K^{1/2}(\log K)/n + K^{1/2-r_\pi}\}$, $\delta_2 = CK^{-r_y}$ for some constant $C > 0.$

Again, by Markov's inequality and the maximal inequality,
\begin{align*}
\sup_{f_1\in\mathcal{F}_1}\mathbb{G}_n(f_1)\lesssim E\sup_{f_1\in\mathcal{F}_1}\mathbb{G}_n(f_1)\lesssim J_{[]}\{||F_1||_{P,2},\mathcal{F},L_2(P)\},
\end{align*}
where $F_1:=C\delta_2$ for some constant $C > 0$ so that $||F_1||_{P,2} \lesssim \delta_2.$

Similar to characterizing $R_1$, we we bound $J_{[]}(||F_1||_{P,2},\mathcal{F}_1,L_2(P)$ by $n_{[]}(\varepsilon,\mathcal{F}_1, L_2(P))$:
\[J_{[]}(||F_1||_{P,2},\mathcal{F}_1,L_2(P) \lesssim \int^\delta_0 \{n_{[]}(\varepsilon,\mathcal{F}_1, L_2(P))\}^{1/2}\text{d}\varepsilon.\]

Then, we bound $n_{[]}(\varepsilon,\mathcal{F}_1, L_2(P))$:
\begin{align*}
\log n_{[]}\{\varepsilon, \mathcal{F}_1, L_2(P)\}&\lesssim \log n_{[]}\{\varepsilon/\delta_2, \mathcal{F}_0, L_2(P)\}\\
&\lesssim \log n_{[]}\{\varepsilon/\delta_2, G_{10}, L_2(P)\} + \log n_{[]}\{\varepsilon/\delta_2, G_{20}, L_2(P)\}\\
&\lesssim \log n_{[]}\{\varepsilon/\delta_2, \mathcal{M}, L_2(P)\} + \log n_{[]}\{\varepsilon/\delta_2, \mathcal{H}, L_2(P)\}\\
&\lesssim (\delta_1/\varepsilon)^{1/k_1}+(\delta_2/\varepsilon)^{1/k_2}.
\end{align*}

where 
\[\mathcal{F}_0=\{f_1:||m-m^*||_{P,2}\leq C, ||\Delta||_{P,2}\leq 1\},\]
\[\mathcal{G}_{10}=\{m\in \mathcal{M}+m^*:||m||_{P,2}\leq C\},\]
\[\mathcal{G}_{20}=\{\Delta\in\mathcal{H}-\lambda_2^{*\top} B(x):||\Delta||_{P,2}\leq 1\}.\]

The second inequality is due to $\rho'$ is Lipschitz and bounded away from 0.

Therefore we have
\[J_{[]}\{||F_1||_{P,2},\mathcal{F}_1,L_2(P)\}\lesssim\int^\delta_0(\delta_1/\varepsilon)^{(1/2k_1)}d\varepsilon + \int^\delta_0(\delta_2/\varepsilon)^{(1/2k_2)}d\varepsilon.\]

By $2k_1>1$ and $2k_2>1$, we have $J_{[]}\{||f_1||_{P,2}, \mathcal{F}, L_2(P)\} = o(1)$. This gives $\sup_{f_1\in\mathcal{F}_1}\mathbb{G}_n(f_1) = o_p(1)$.

Now we look at $n^{1/2}\sup_{f_1\in\mathcal{F}_1}Ef_1,$
\begin{align*}
n^{1/2}\sup_{f_1\in\mathcal{F}} Ef_1 
&=n^{1/2}\sup_{m\in\mathcal{G}_1, \Delta\in\mathcal{G}_2}E\{\pi(X)[n\rho'\{m(X)\}-1]\Delta(X)\}\\
&=n^{1/2}\sup_{m\in\mathcal{G}_1, \Delta\in\mathcal{G}_2}E([n\rho'\{m(x)\}-\frac{1}{\pi(x)}]\pi(x)\Delta(x))\\
&\lesssim n^{1/2} \sup_{m\in\mathcal{G}_1}||n\rho'\{m(x)\}-\frac{1}{\pi(x)}||_{P,2} \sup_{\Delta\in\mathcal{G}_2}||\Delta(x)||_{P,2}\\
&\lesssim n^{1/2}\delta_1\delta_2 = o_p(1),
\end{align*}
where 
$\mathcal{G}_1=\{m\in\mathcal{M}: ||m-m^*||_{P,2}\leq\delta_1\}, \mathcal{G}_2=\{\Delta\in\mathcal{H}-\lambda_2^{*\top} B(x):||\Delta||_\infty\leq\delta_2\}$.

The last equality is due to the assumption $n^{1/2}\lesssim
K^{r_\pi+r_y-1/2}$.

Therefore, we can conclude that $n^{1/2}R_1=o_p(1).$

Lastly, $R_2 = \lambda_2^{*\top}\{\sum^n_{i=1} (Z_i w^*_i - 1/n )
B(X_i)\} = o_p(1)$ by $\sum^n_{i=1} (Z_i w^*_i - 1/n ) B(X_i)\leq
||\delta||^2 = o_p(1)$ due to the constraints posited in the
optimization problem for minimal weights.

We finally prove the consistency of the variance estimator. We need a
stronger smoothness assumption, i.e. $r_y>1$.

Under assumptions \ref{assumption:wtconsistency} and
\ref{assumption:regularity}, we construct a variance estimator based
on a direct approximation of the efficient influence function. Recall
that the efficient influence function determines the semiparametric
efficiency bound \citep{hahn1998role}:
\begin{align*}
V_{opt} & := var(Y(X_i)) + E\{var(Y| X_i)/\pi(X_i)\}\\
& = E\left\{\left(\frac{Z_iY_i}{\pi(X_i)} - \bar{Y} - Y(X_i)(\frac{Z_i}{\pi(X_i)}-1)\right)^2\right\}.
\end{align*}

We estimate $V_{opt}$ with $\hat{V}_K$:
\begin{align*}
\begin{split}
\hat{V}_K = &\frac{1}{n}\sum^n_{i=1} \left[ nZ_iw_iY_i - \sum_{i=1}^nw_iY_i \right.\\
&\left. - B(X_i)^\top \left\{ \frac{1}{n}\sum_{i=1}^nZ_iw_iB(X_i)^\top B(X_i) \right\}^{-1}
\right.
\left.
\left\{ \frac{1}{n}\sum_{i=1}^nZ_iw_iB(X_i)^\top Y_i \right\}
(nZ_iw_i - 1) \right]^2.
\end{split}
\end{align*}
In particular, $\{\frac{1}{n}\sum_{i=1}^nZ_iw_iB(X_i)^\top
B(X_i)\}^{-1}\{\frac{1}{n}\sum_{i=1}^nZ_iw_iB(X_i)^\top Y_i\}$ is a least
square estimator of $Y(X_i)$.

To show $\hat{V}_K$ is consistent with $V_{opt}$, it is sufficient to show 
\begin{align*}
|B(X_i)^\top\{\frac{1}{n}\sum_{i=1}^nZ_iw_iB(X_i)^\top
B(X_i)\}^{-1}\{\frac{1}{n}\sum_{i=1}^nZ_iw_iB(X_i)^\top
Y_i\}-Y(X_i)|\stackrel{a.s.}{\rightarrow} 0.\qquad (**)
\end{align*}
This is because  $nw_i$ is a consistent estimator of $1/\pi(X_i)$ by
\Cref{thm:wtconsistency} and $\sum_{i=1}^nw_iY_i$ is a consistent
estimator of $\bar{Y}$ by \Cref{thm:asymptotics}.

Below we prove $(**)$. 

We first rewrite $Y_i$ as 
$Y_i = B(X_i)^\top\lambda_2^* + \gamma + \epsilon_i,$
where $\gamma = O(K^{-r_y})$ from Assumption
\ref{assumption:regularity}.3, and $\epsilon_i$ is some iid zero mean
error with variance $\sigma^2 = var(Y|X_i)$. Therefore,
\begin{align*}
&\{\frac{1}{n}\sum_{i=1}^nZ_iw_iB(X_i)^\top
B(X_i)\}^{-1}\{\frac{1}{n}\sum_{i=1}^nZ_iw_iB(X_i)^\top
Y_i\}\\
=&\{\frac{1}{n}\sum_{i=1}^nZ_iw_iB(X_i)^\top
B(X_i)\}^{-1}[\frac{1}{n}\sum_{i=1}^nZ_iw_iB(X_i)^\top
\{B(X_i)^\top\lambda_2^* + \gamma + \epsilon_i\}]\\
=&\lambda^*_2 + \{\frac{1}{n}\sum_{i=1}^nZ_iw_iB(X_i)^\top
B(X_i)\}^{-1}\{\frac{1}{n}\sum_{i=1}^nZ_iw_iB(X_i)^\top\}\gamma
\\
&+ \{\frac{1}{n}\sum_{i=1}^nZ_iw_iB(X_i)^\top
B(X_i)\}^{-1}\{\frac{1}{n}\sum_{i=1}^nZ_iw_iB(X_i)^\top\epsilon_i\}\\
=&\lambda^*_2 + E\{Z_iw_iB(X_i)^\top
B(X_i)\}^{-1}E\{Z_iw_iB(X_i)^\top\}\{\gamma+E(\epsilon_i)\} + O_p(n^{-1/2})\\
=&\lambda^*_2 + O_p(K^{-r_{y}+1/2}).
\end{align*}

The last equality is due to assumptions
\ref{assumption:wtconsistency}.4 and \ref{assumption:regularity}.3 and
the law of large numbers.

Finally we have 
\begin{align*}
&B(X_i)^\top\{\frac{1}{n}\sum_{i=1}^nZ_iw_iB(X_i)^\top
B(X_i)\}^{-1}\{\frac{1}{n}\sum_{i=1}^nZ_iw_iB(X_i)^\top
Y_i\}\\
=&B(X_i)^\top\lambda^*_2 + B(X_i) \cdot O_p(K^{-r_{y}+1/2})\\
=&Y(X_i) + B(X_i) \cdot O_p(K^{-r_{y}+1/2}) + O_p(K^{-r_{y}})\\
=&Y(X_i) + o_p(1)
\end{align*}
The last equality is due to assumption
\ref{assumption:wtconsistency}.4 and the additional assumption $r_y>1$.
\end{proof}

\section{\Cref{thm:oracle_words} explained}

\label{sec:oracle}

Due to the connection to shrinkage estimation, for each basis function
that we balance, we implicitly include a corresponding term in the
inverse propensity score model. In practice, we are often concerned
about the estimation loss due to fitting an overly complex model. In
the context of minimal weights, an overly complex model corresponds to
balancing more terms than are needed.

\Cref{thm:oracle_words} is an oracle inequality that bounds this loss
and states that approximate balancing---as opposed to exact
balancing---mimics the act of upper bounding the number of effective
balancing constraints. Hence, minimal weights do not suffer much from
excessive balancing when few constraints are active. We also remark
that this sparsity assumption on the balancing constraints is commonly
satisfied in real data sets. This is exemplified by the sparsity of
the shadow prices in the 2010 Chilean post earthquake survey data; see
Figure 1 of \citet{zubizarreta2015stable}.

The oracle inequality we proved in \Cref{sec:practice} leverages an
oracle inequality for lasso in the high dimensional generalized linear
model literature \citep{van2008high}. The original oracle inequality
says the lasso estimator (with $\ell^1$ penalty) under general
Lipschitz losses behaves similarly to the estimator with $\ell^0$
penalty, if the true generalized linear model is sparse.

Recall that the minimal weights compute
$$\lambda^\dagger := \argmin G(\lambda) =  \argmin \sum^n_{j=1} \left\{-Z_j \rho(A_j^\top \lambda)
    + A_j^\top \lambda \cdot \frac{1}{n} \right\} + |\lambda|^\top \delta.$$
This is a lasso estimator under the loss function 
\[L_{w}(x, z) = -z\cdot n(\rho\circ(\rho')^{-1}\circ w)(x) + ((\rho')^{-1}\circ w)(x),\]
where the fit for $w$ is $\hat{w}(x) = \rho'(B(x)^\top\hat{\lambda})$. This loss function is the same loss function as in \Cref{eq:dualloss} but written as a function of $w$.
Correspondingly, the empirical loss is
\[\sum_n^{i=1}L_w(X_i, Z_i) = \frac{1}{n}\sum^n_{i=1} \left\{-Z_i n\cdot (\rho\circ(\rho')^{-1}\circ w)(X_i) + ((\rho')^{-1}\circ w)(X_i) \right\},\]
and the theoretical risk is
\begin{align*}
EL_w(X,Z) = &\frac{1}{n}\sum^n_{i=1} E\left\{-Z_in \cdot (\rho\circ(\rho')^{-1}\circ w)(X_i) + ((\rho')^{-1}\circ w)(X_i)\right\}\\
=&\frac{1}{n}\sum^n_{i=1}\left\{-\pi(X_i)\cdot n(\rho\circ(\rho')^{-1}\circ w)(X_i) + ((\rho')^{-1}\circ w)(X_i)\right\}.
\end{align*}
We define the target $w^0$ as the minimizer of the theoretical risk
\[w^0(x):=\argmin EL_w(X,Z) = \frac{1}{n\pi(x)}.\]
The last equality is due to setting $\partial EL_w(X,Z)/\partial w =
0.$ This is the true inverse propensity score function used for
inverse probability weights. We are interested in studying the excess
risk of estimators
\[\mathcal{E}(w) := E\{L_w(X,Z) - L_{w^0}(X,Z)\}.\]

For simplicity of notation, we write $w_\lambda(x) = \rho'(B(x)^\top\lambda), \lambda\in\mathbb{R}^K$. Approximate balancing weights thus perform the empirical risk minimization of
\[\lambda^\dagger := \argmin_{\lambda}\{\frac{1}{n}\sum^n_{i=1}L_{w_\lambda}(X_i,Z_i)+|\lambda|^\top \delta\}.\]
 We look at the case of $\delta = \delta^+(\hat{\sigma}_1, \hat{\sigma}_2, ..., \hat{\sigma}_K),$ for some $ \delta^+ > 0$, where $\hat{\sigma}_k$ is the (sample) standard error of $B_k(X), k = 1, ..., K$. This aligns closely with the common way of setting $\delta$; we specify approximate balancing constraints in units of the standard error of each covariate.

We consider the following oracle estimator
\[\lambda^* := \argmin_\lambda\{EL_{w_\lambda}(X,Z)+||\lambda||_0 \cdot C_0\},\]
for some constant $C_0 > 0.$ $\lambda^*$ can also be seen as the minimizer of $\mathbb{P}L_{w_\lambda}$ under the constraint that $||\lambda||_0\leq C_1$ for some $C_1 > 0$. 

$||\lambda||_0$ is the number of nonzero entries of $\lambda.$ This is
also the number of active or effective covariate balancing constraints
in the optimization problem (\ref{eq:covbalineq}). In this sense, the
oracle estimator roughly performs the same covariate balancing exactly
as its approximate counterpart $\lambda^\dagger$ but the number of
effective constraints being capped by some constant.

We now assume the following conditions hold and present the oracle
inequality.

\begin{assumption}
\label{assumption:oracleassumption}
The following conditions hold:
\begin{enumerate}
\item There exist constants $0 < c_0 < 1/2,$ such that $c_0\leq
n\rho'(v) \leq 1 - c_0$ for any $v = B(x)^\top\lambda$ with
$\lambda\in int(\Theta)$. Also, there exist constants $c_1 < c_2 < 0$,
such that $c_1 \leq n\rho''(v) \leq c_2<0$ in some small neighborhood
$\mathcal{B}$ of $v^* = B(x)^\top\lambda^\dagger$.
   \item $\epsilon_0 < \pi(x) < 1 - \epsilon_0, \forall x$, for some constant $0 < \epsilon_0 < 1$,
   \item $M := \max ||B_k(x)||_\infty/\sigma_k <\infty,$ where $\sigma_k$ is the (population) standard deviation of $B_k(X), k = 1, ..., K.$
\end{enumerate}
\end{assumption}

Commenting on the previous assumptions, Assumption 3.1 is similar to
Assumption 1.3.  Assumption 3.2 is similar to the overlap condition of
propensity scores. Both of them ensure the quadratic margin condition
required by the lasso oracle inequality (the quadratic margin
condition says in the $\ell^\infty$ neighborhood of $w^0$ the excess
risk $\mathcal{E}$ is bounded from below by a quadratic function).
Assumption 3.3 is similar to Assumption 1.4. It ensures the existence
of the constant $\bar{\lambda} > 0$ in the theorem.

We further assume the following technical conditions.
\begin{assumption} Assume the following technical conditions hold.
\label{assumption:oracletechnical}
\begin{enumerate}
\item There exists $\eta>0$ such that $n||w_{\lambda^*} - w^0||_\infty \leq \eta$ and $n||w_{\tilde{\lambda} - w^0}||_\infty \leq \eta$, where $\tilde{\lambda} = \argmin_{\lambda\in\Theta: \sum_k \sigma_k|\lambda - \lambda^*|\leq 9\mathcal{E}(w_{\lambda^*}) + 675\bar{\lambda^2}||\lambda^*||_0}\{\mathcal{E}(w_{\lambda}) - 15\bar{\lambda}\sum_{k:\lambda^*\ne 0}\sigma_k |\lambda - \lambda^*|\}$,
\item $\{\log(2K)\}^{1/2}n^{-1/2}M\leq 0.13,$
\item $a_n:= \{2\log(2K)\}^{1/2}Mn^{-1/2} + \log(2K)Mn^{-1/2},$
\item For some $t>0$ we are free to set, $\bar{\lambda} := 4a_n(1+ t\{2(1+8a_nM)\}^{1/2} +8t^2a_nM/3)> 6.4 \{\log(2K)\}^{1/2}n^{-1/2},$
\item $s>0$ solves $a_n(1+ s\{2(1+2a_nM)\}^{1/2} + 2t^2a_nM/3) = 9/5,$
\item $\alpha = \exp(-na_n^2s^2)+7\exp(-4na_nt^2).$
\end{enumerate}
\end{assumption}
The technical assumptions are inherited from Theorem 2.2 of \citet{van2008high}. 

The first technical assumption is needed because the quadratic margin
condition $\mathcal{E}(nw_\lambda) \geq cn||w_\lambda-w^0||^2$ only
holds locally for $w_\eta$ within the $\eta$ neighborhood of $w^0$,
$||w_\lambda-w^0||_\infty\leq \eta/n.$ The estimator $\tilde{\lambda}$
strikes the balance between how much excess risk it incurs and how
different it is from the oracle estimator $\lambda^*$ in the $\ell^1$
neighborhood of $\lambda^*$.

The second technical assumption is to ensure the applicability of
Bousquet's inequality to the empirical process induced by $Z$
conditional on $X$. The constant 0.13 is rather arbitrary; it could be
replaced by any constant smaller that $(\sqrt{6}-\sqrt{2})/2$ if other
constants are adjusted accordingly.

The third technical assumption on $a_n$ is due to the usual rate of
decay in probability for Gaussian linear model with orthogonal design,
resulting from a symmetrization inequality and a contraction
inequality.

The fourth technical assumption on $\bar{\lambda}$ is setting a lower
bound for the smoothing parameter. It follows from the Bousquet's
inequality. $t$ is a parameter to be set by users; we need to strike
the balance between small excess risk due to small $t$ and large
confidence in the upper bound for excess risk due to large $t$.

The fifth technical condition on $s$ is due to the contraction inequality for the additional randomness in standard error of covariates $\hat{\sigma}_k$ relative to the true standard deviation $\sigma_k$. 

The sixth technical condition on $\alpha$ defines ``with high probability'' as with probability $1-\alpha$ where $\alpha$ decays exponentially in $n$.

With these assumptions, we have the following theorem.

\begin{theorem}
\label{thm:oracle}
Under Assumption \ref{assumption:oracleassumption} and  Assumption \ref{assumption:oracletechnical}, with probability at least $1-\alpha$, we have
\[\mathcal{E}(w_{\lambda^\dagger}) \leq 3\mathcal{E}(nw_{\lambda^*}) + 225 \bar{\lambda}^2 ||\lambda^*||_0,\]
and 
\[\sum_k \sigma_k| \lambda^\dagger_k-\lambda^*_k|\leq \frac{21}{4}\bar{\lambda}\mathcal{E}(nw_{\lambda^*}) + \frac{ 1575 \bar{\lambda}}{4}||\lambda^*||_0, \]
where $\bar{\lambda} > 0$ is a constant that depends on $K$.
\end{theorem}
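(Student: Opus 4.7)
The plan is to cast the minimal weights problem as $\ell_1$-penalized empirical risk minimization for a Lipschitz loss and then invoke the oracle inequality of \citet{van2008high}. Writing $w_\lambda(x) = \rho'\{B(x)^\top\lambda\}$, the dual program in \Cref{thm:shrink} is exactly $\lambda^\dagger = \argmin_\lambda\{n^{-1}\sum_i L_{w_\lambda}(X_i,Z_i) + \delta^{+}\sum_k \hat\sigma_k|\lambda_k|\}$, where $L_w$ is the loss defined in \Cref{sec:practice}. The target minimizer is $w^0(x) = 1/\{n\pi(x)\}$, obtained from $\partial EL_w/\partial w = 0$, so the oracle program for $\lambda^*$ compares the approximate balancing fit to the best sparse generalized linear model fit for the inverse propensity score.

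First I would verify the analytic conditions underlying van de Geer's argument. The Lipschitz property of $w \mapsto L_w$ on the relevant sublevel set follows from Assumption \ref{assumption:oracleassumption}.1, since $\rho$ and $(\rho')^{-1}$ have bounded derivatives on the range of $nw^0$; combined with bounded overlap from Assumption \ref{assumption:oracleassumption}.2, a second-order Taylor expansion of $w \mapsto EL_w - EL_{w^0}$ around $w^0$ yields the quadratic margin condition $\mathcal{E}(nw_\lambda) \gtrsim \|w_\lambda - w^0\|_{P,2}^2$ within the $\ell^\infty$-ball of radius $\eta/n$ around $w^0$, which is exactly the region isolated by Assumption \ref{assumption:oracletechnical}.1. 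Assumption \ref{assumption:oracleassumption}.3 then supplies the constant $M$ bounding $\|B_k/\sigma_k\|_\infty$ that drives all concentration bounds.

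Next I would set up the concentration machinery. The key object is the empirical process $\lambda \mapsto (\mathbb{P}_n - P)\{L_{w_\lambda} - L_{w_{\lambda^*}}\}$, whose randomness, conditional on $(X_i)$, lives in the Bernoulli variables $Z_i$. Applying Bousquet's inequality to a symmetrization of this process and using the Lipschitz property through a contraction inequality produces a uniform deviation bound scaling with $a_n = \sqrt{2\log(2K)/n}\,M + \log(2K)M/n$; Assumption \ref{assumption:oracletechnical}.2 is what enables Bousquet's inequality, Assumption \ref{assumption:oracletechnical}.3 records the resulting rate, and Assumption \ref{assumption:oracletechnical}.4 calibrates $\bar\lambda$ so that the stochastic term in the basic inequality can be absorbed by the penalty. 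The replacement of $\sigma_k$ by $\hat\sigma_k$ in the penalty contributes a further Rademacher-type remainder that a second contraction step controls at the threshold $s$ of Assumption \ref{assumption:oracletechnical}.5; the union bound over these events gives the failure probability $\alpha$ of Assumption \ref{assumption:oracletechnical}.6.

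Finally I would run the two-step lasso oracle argument. The basic inequality $n^{-1}\sum_i L_{w_{\lambda^\dagger}}(X_i,Z_i) + \delta^{+}\sum_k \hat\sigma_k|\lambda^\dagger_k| \leq n^{-1}\sum_i L_{w_{\lambda^*}}(X_i,Z_i) + \delta^{+}\sum_k \hat\sigma_k|\lambda^*_k|$ is rearranged into an excess-risk bound, the empirical-process difference is replaced by its $\bar\lambda\sum_k \sigma_k|\lambda^\dagger_k-\lambda^*_k|$ upper bound, and a convex combination between $\lambda^\dagger$ and the intermediate estimator $\tilde\lambda$ from Assumption \ref{assumption:oracletechnical}.1 keeps the argument inside the $\eta$-ball where the quadratic margin holds. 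Turning $\|w_{\lambda^\dagger}-w^0\|_{P,2}^2$ back into $\mathcal{E}(w_{\lambda^\dagger})$ via the margin condition and tracking absolute constants delivers both displayed inequalities with the factors $3$, $225$, $21/4$, and $1575/4$. The main obstacle will be the quadratic margin step: the loss $L_w$ is not a standard GLM log-likelihood, so its curvature in $w$ must be computed through $\rho$ and $(\rho')^{-1}$ and then transferred from $\|\cdot\|_{P,2}$ on $w$ to $\sum_k \sigma_k|\lambda_k - \lambda^*_k|$ on coefficients, all while checking that $\tilde\lambda$ stays in the local neighborhood required by Assumption \ref{assumption:oracletechnical}.1 so the induction that powers van de Geer's proof actually closes.
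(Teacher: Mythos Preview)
Your proposal is correct and follows the same strategy as the paper: identify the dual of Problem~(\ref{eq:covbalineq}) with a weighted-$\ell_1$ penalized empirical risk minimization, verify the Lipschitz, quadratic-margin, and compatibility conditions for the non-standard loss $L_w$ via Assumption~\ref{assumption:oracleassumption}, and then read off the oracle inequality from \citet{van2008high}. The only difference is one of packaging: the paper treats Theorem~2.2 of \citet{van2008high} as a black box and spends the proof checking their Assumptions L, B, and C, whereas you propose to unroll the internals of that theorem (Bousquet's inequality, symmetrization, contraction, the basic inequality with the auxiliary point $\tilde\lambda$). That extra work is not needed here---Assumption~\ref{assumption:oracletechnical} is precisely the list of technical hypotheses in \citet{van2008high}, so once L, B, C are established the displayed inequalities follow by direct citation with $H(u)=cu^2/2$ and $G(u)=u^2/(2c)$---but nothing in your plan is wrong, and your identification of the curvature transfer from $\|w_\lambda-w^0\|_{P,2}$ to $\sum_k\sigma_k|\lambda_k-\lambda_k^*|$ as the delicate step matches the paper's verification of Assumption~C.
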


\Cref{thm:oracle_words} in \Cref{subsec:oracle} is a consequence of \Cref{thm:oracle} and Assumption \ref{assumption:wtconsistency}.

\Cref{thm:oracle} is a consequence of Theorem 2.2 in \citet{van2008high} where the oracle properties for lasso estimators are established under general convex loss. 
We only need to show that the assumptions for Theorem 5.1 imply the assumptions of Theorem 2.2 in \citet{van2008high} so that their conclusion applies. 

When there are few active covariate balancing constraints,
$||\lambda^*||_0$ will be small. The theorem then says that the excess
risk of minimal weights is of the same order as the oracle estimator.
Therefore, minimal weights mimics the exact balancing weights under a
capped number of effective constraints. In other words, resorting to
approximation in covariate balancing enjoys a similar effect of
capping the number of effective balancing constraints. Hence, minimal
weights is immune to the loss of excessive balancing.

An important practical question is how many covariates we should
balance. Exact balancing weights can only balance a few covariates,
because otherwise the problem does not admit a solution. Minimal
weights relieve this problem: we can balance much more covariates with
$\delta$ appropriately set. This oracle inequality says that we do not
need to worry about excessive balancing. We only need to find a sweet
spot between balancing many covariates loosely and balancing a few
covariates strictly. This amounts to setting $\delta$ appropriately,
which we address in \Cref{sec:practice}.

Below we prove \Cref{thm:oracle}.
\begin{proof}
We only need to show assumptions L, B, and C in Theorem 2.2 of
\citet{van2008high} so that their oracle inequality applies to minimal
weights.

First we show assumption L: the loss function is convex and Lipschitz. The loss function writes $L_{w}(x, z) = -z\cdot (\rho\circ(\rho')^{-1}\circ nw)(x) + ((\rho')^{-1}\circ w)(x)$. Fixing $z$, we have
\[\frac{\partial L_w(x, z) }{\partial w} = \{-z\cdot nw(x)+1\}\{-\frac{\rho''}{n(\rho')^2}(nw(x))\}.\]
This is bounded due to assumptions \ref{assumption:oracleassumption}.1 and \ref{assumption:oracleassumption}.3, implying the Lipschitz property: derivatives of $\rho$ and bounded, $z$ is bounded by [0,1] and $nw$ is bounded due to $n\rho'$ is bounded. The convexity of the loss is shown in Appendix B of \citet{chan2016globally}.

We then show assumption B: the quadratic marginal condition. We compute the second derivative of $EL_w$:
\begin{align*}
\frac{\partial^2 EL_w(X,Z)}{\partial w^2} 
=& -\pi(x)\cdot((n\rho')^{-1})' nw(x)+\{-\pi(x)nw(x)+1\}\{((n\rho')^{-1})''nw(x)\}\\
\geq&\pi(x)\frac{\rho''}{(\rho')^2}(\frac{1}{\pi(x)}) + |\eta|\cdot \{((\rho')^{-1})''nw(x)\}.
\end{align*}
This is lower bounded by a positive constant when $\eta > 0$ is small enough. This is ensured again by Assumption 3.1, in particular the concavity of $\rho.$ The last step is due to a Taylor expansion around $nw(x) = 1/\pi(x)$ in its $\eta$-neighborhood. 

Lastly we show assumption C: $\sum_{k\in\mathcal{K}}\sigma_k|\lambda_k-\tilde{\lambda}_k|\leq |\mathcal{K}| \cdot ||w_\lambda - w_{\tilde{\lambda}}||$. This is again ensured by Assumption \ref{assumption:oracleassumption}.1, in particular the boundedness of the first and second derivative.

The theorem then follows from Theorem 2.2 of \citet{van2008high} where $H= cu^2/2$ and $G =u^2/(2c)$ for some constant $c > 0$ due to the quadratic margin condition.
\end{proof}

\section{Details on Empirical Studies}

\label{sec:empirical_supp}


\subsection{A Remark on the Right Heart Catheterization Study}

A remark on \Cref{table:rhc}(b) is that the optimal error of the
weighting estimator for the average treatment effect on the treated is
sometimes smaller under bad overlap than under good overlap. This may
be counterintuitive, but is a result of the estimand changing under
good and bad overlap when estimating the average treatment effect on
the treated. Specifically, the treated population is different in the
simulated data sets with good and bad overlap, so the estimand is
different. This phenomenon is absent when estimating the average
treatment effect, where the estimand is the same under good and bad
overlap (see \Cref{table:rhc}(a)).


\subsection{The Kang and Schafer Example}

The Kang and Schafer example \citep{kang2007} consists of four
unobserved covariates $U_i
\stackrel{iid}{\sim} N(0, I_4)$, $ i= 1,..., n$. They are used to
generate four covariates $X_i$ that are observed by the investigator:
$X_{i1} = \exp(U_{i1}/2),$ $X_{i2} = U_{i2}/\{{1 + \exp(U_{i1})}\}+ 10,$
$X_{i3} = (U_{i1}U_{i3}+0.6)^3,$ and $X_{i4} = (U_{i2}+U_{i4}+20)^2.$
There is an outcome variable $Y_i$ generated by $Y_i = 210 +
27.4U_{i1} + 13.7 2U_{i2} + 13.7U_{i3} +13.7U_{i4}+\epsilon_i$ where
$\epsilon_i\stackrel{iid}{\sim} N(0, 1)$, and an incomplete outcome
indicator $Z_i$ generated as a Bernoulli random variable with
parameter $p_i = \exp(-U_{i1} - 2U_{i2} - 0.25U_{i3} - 0.1U_{i4})$.
This incomplete outcome indicator denotes whether the outcome is
observed ($Z_i = 1$) or not ($Z_i = 0$).

Using this data generation mechanism, the mean difference of the
observed covariates between the complete and incomplete outcome data
is of $(-0.4, -0.2, 0.1, -0.1)$ standard deviations. We consider this
the ``good overlap'' case. We also consider another case where the
generating mechanism of $p_i$ is slightly different: $p_i =
\exp(-U_{i1} - 0.5U_{i2} - 0.25U_{i3} - 0.1U_{i4}).$ This makes
covariate balance slightly worse, resulting in slightly larger mean
differences of $(-0.3, -0.5, -0.1, -0.4)$ standard deviations. We
consider this the ``bad overlap'' case. 

Tables \ref{table:ks} presents the root mean squared error of the
weighting estimates. Approximate balance outperforms exact balance in
the bad overlap case. The improvement is not as marked as we
documented in the RHC study because the good and bad overlap cases do
not differ much: the mean difference goes from $(-0.4, -0.2, 0.1,
-0.1)$ in the good overlap to $(-0.3, -0.5, -0.1, -0.4)$ in the bad
overlap case. With this relatively small change in covariate balance,
minimal weights immediately outperform the exact balancing weights in
the bad overlap cases. This gives us an understanding of when we
should use minimal weights. We also observe that minimal weights can
sometimes outperform the exact balancing weights in the good overlap
case. 

\begin{table}
\begin{subtable}[h!]{\linewidth}
\centering
\begin{tabular}{lrrrrr}
\toprule
\multicolumn{1}{c}{} & \multicolumn{2}{c}{Good Overlap}
&&\multicolumn{2}{c}{Bad Overlap}\\
\multicolumn{1}{c}{Minimize} & \multicolumn{1}{c}{Exact} &
\multicolumn{1}{c}{Approx.} & &\multicolumn{1}{c}{Exact} &
\multicolumn{1}{c}{Approx.}\\
\midrule Absolute Deviation& \textbf{6.38} & \textbf{6.38} & & 7.83 &
\textbf{7.20}\\ Variance& \textbf{5.71} & 5.79 & & 5.99  &
\textbf{5.65}\\ Negative Entropy& \textbf{5.55} & 5.99& & 5.75 &
\textbf{5.30} \\
\bottomrule
\end{tabular}
\caption{Mean unobserved outcome}
\end{subtable}
\begin{subtable}[t]{\linewidth}
\centering
\begin{tabular}{lrrrrr}
\toprule
\multicolumn{1}{c}{} & \multicolumn{2}{c}{Good Overlap}
&&\multicolumn{2}{c}{Bad Overlap}\\
\multicolumn{1}{c}{Minimize} & \multicolumn{1}{c}{Exact} &
\multicolumn{1}{c}{Approx.} & &\multicolumn{1}{c}{Exact} &
\multicolumn{1}{c}{Approx.}\\
\midrule Absolute Deviation& 6.38 & \textbf{5.01} & & 4.87 &
\textbf{4.80}\\ Variance& \textbf{4.50} & 4.59 & & 4.98  &
\textbf{4.85}\\ Negative Entropy& \textbf{3.70} & 3.85& & 4.97 &
\textbf{4.87} \\
\bottomrule
\end{tabular}
\caption{Mean outcome}
\end{subtable}
\caption{Root mean squared error in the Kang-Schafer study. With bad
overlap, approximate balancing can help reduce the estimation
error.\label{table:ks}}
\end{table}

\begin{figure}[t!]
    \centering
    \begin{subfigure}[t]{0.5\textwidth}
        \centering
        \includegraphics[width=\linewidth]{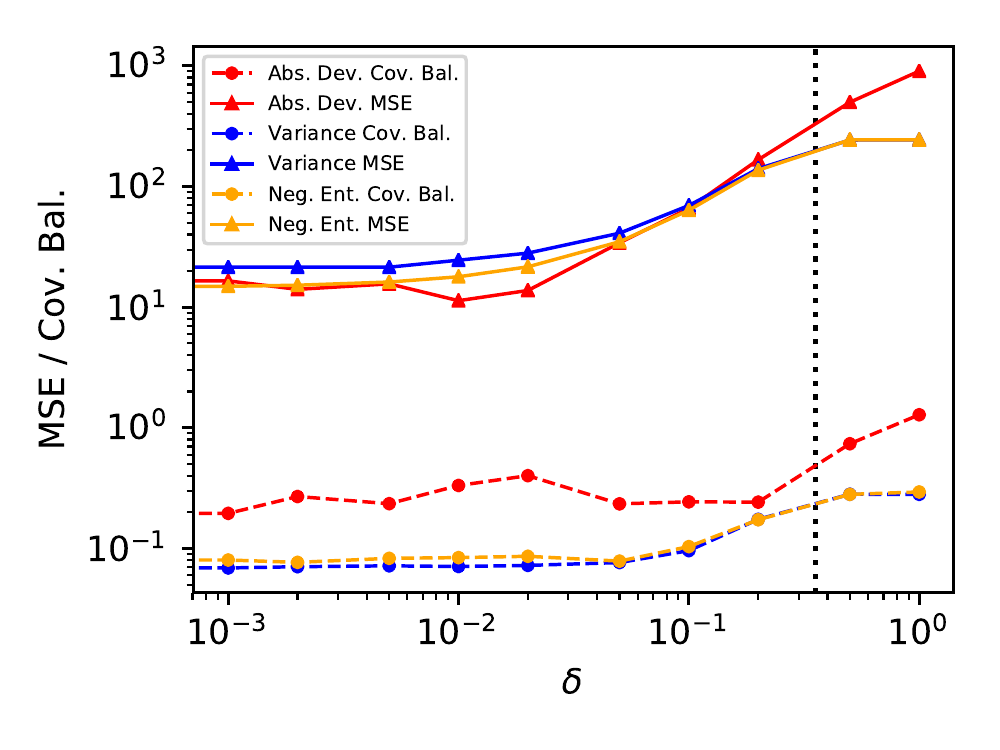}
        \caption{Good overlap}
    \end{subfigure}%
    ~ 
    \begin{subfigure}[t]{0.5\textwidth}
        \centering
        \includegraphics[width=\linewidth]{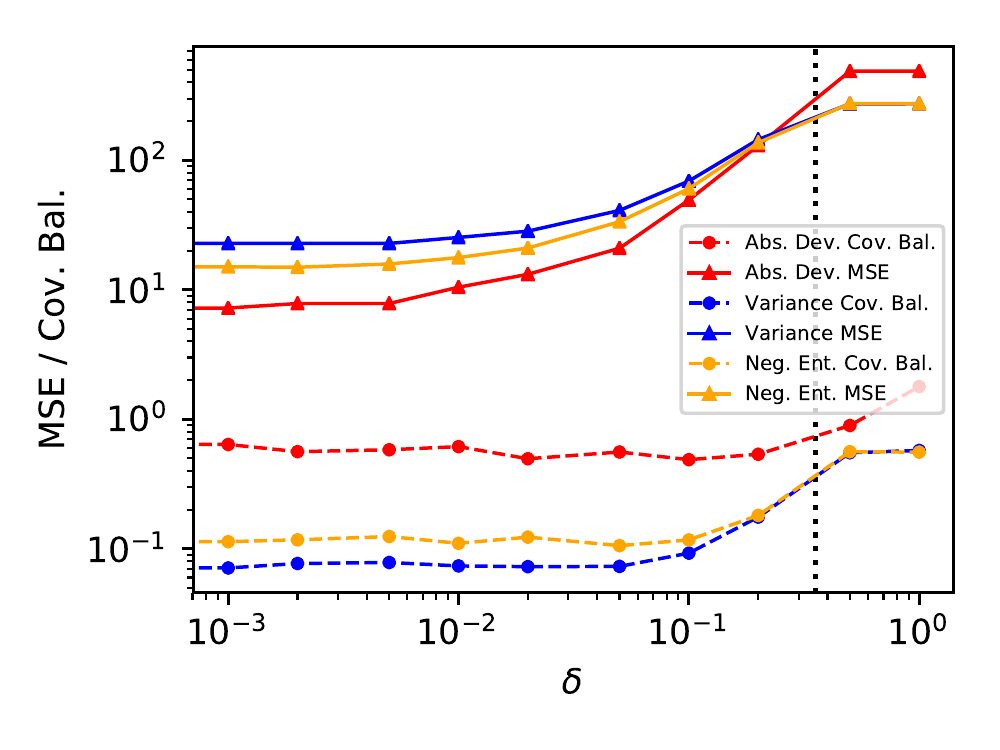}
        \caption{Bad overlap}
    \end{subfigure}
    \caption{Bootstrapped covariate balance $C_S$ and mean squared
    error for different values of $\delta$ for the average treatment
    effect on the treated in the Kang and Shafer study. Using $C_S$ to
    select $\delta$ as in \Cref{alg:tune} coincides with or neighbors
    the optimal $\delta$ with the smallest error. (The horizontal axis
    start from $\delta=0$. The vertical dotted line indicates $\delta
    = K^{-1/2}$, where $K$ is the number of covariates being balanced.
    We recommend not choosing $\delta$'s bigger than $K^{-1/2}$
    because they likely break the assumptions required by the
    asymptotics. )\label{fig:ks_ate_fig}}
\end{figure}

\begin{figure*}[t!]
    \centering
    \begin{subfigure}[t]{0.5\textwidth}
        \centering
        \includegraphics[width=\linewidth]{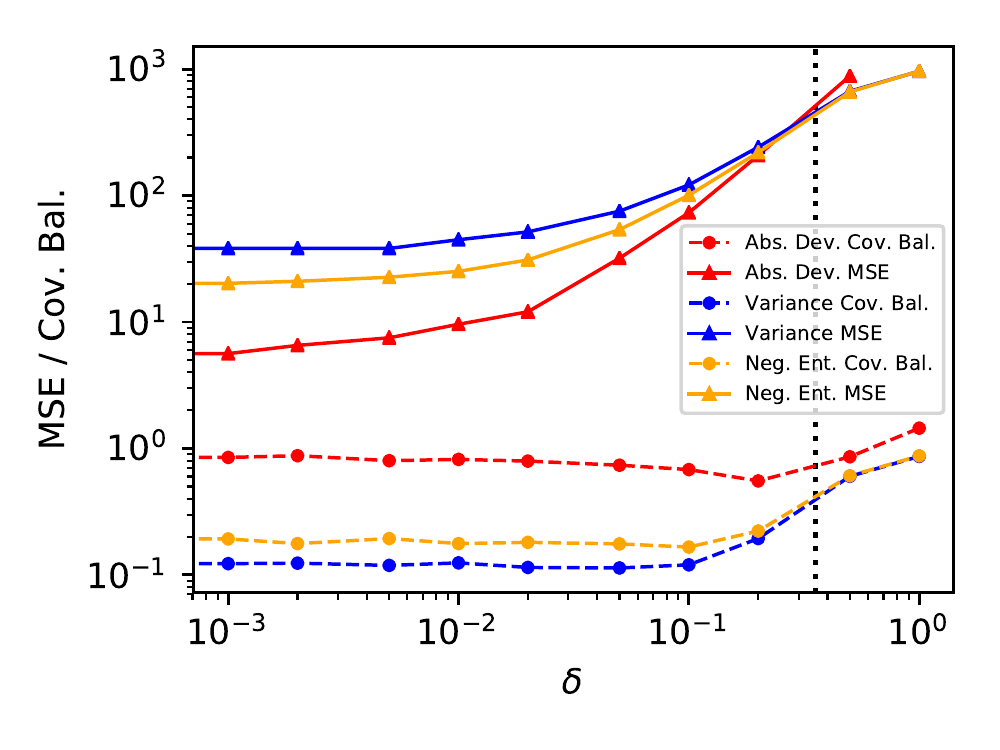}
        \caption{Good overlap}
    \end{subfigure}%
    ~ 
    \begin{subfigure}[t]{0.5\textwidth}
        \centering
        \includegraphics[width=\linewidth]{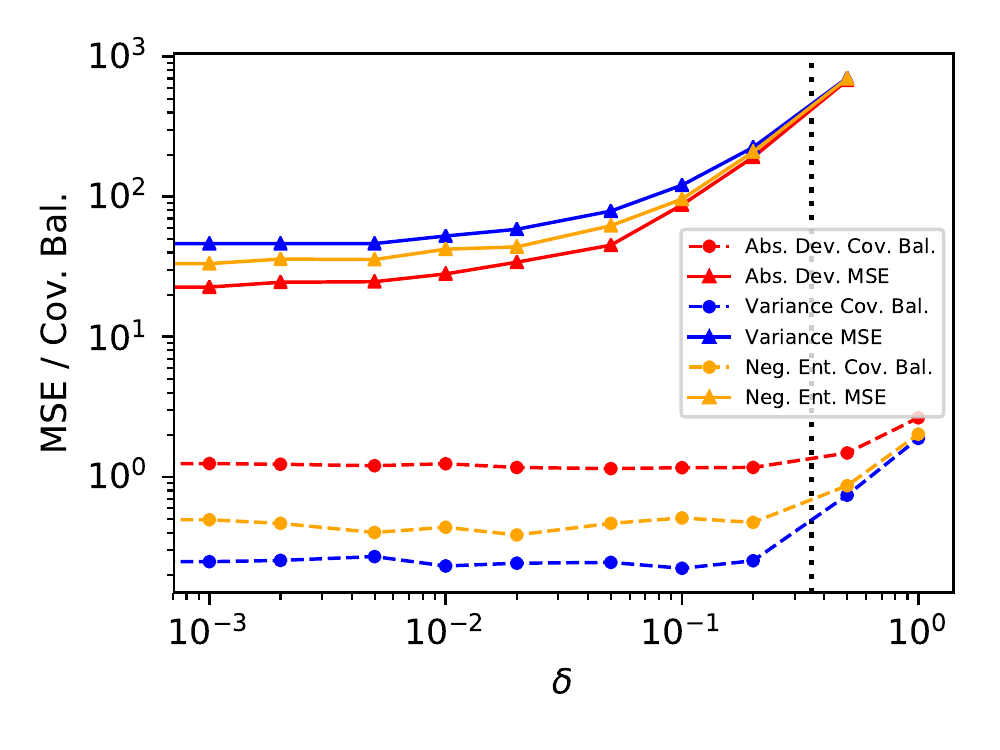}
        \caption{Bad overlap}
    \end{subfigure}
    \caption{Bootstrapped covariate balance $C_S$ and mean squared
    error for different values of $\delta$ for the average treatment
    effect on the treated in the Kang and Shafer study. Using $C_S$ to
    select $\delta$ as in \Cref{alg:tune} coincides with or neighbors
    the optimal $\delta$ with the smallest error. (The horizontal axis
    start from $\delta=0$. The vertical dotted line indicates $\delta
    = K^{-1/2}$, where $K$ is the number of covariates being balanced.
    We recommend not choosing $\delta$'s bigger than $K^{-1/2}$
    because they likely break the assumptions required by the
    asymptotics. )\label{fig:ks_att_fig}}
\end{figure*}

\subsection{The LaLonde Data Set}

We next study the performance of minimal weights in the LaLonde data
set \citep{lalonde1986evaluating}. This data set has two components:
an experimental part from a randomized experiment evaluating a large
scale job training program (the National Supported Work Demonstration,
NSW) on 185 participants; and an observational part, where the
experimental control group from the randomized experiment is replaced
by a control group of 15992 of nonparticipants drawn from the Current
Population Survey (CPS). The experimental part provides a benchmark
for the effect of the job training program to be recovered from
observational part of the data set. This benchmark is \$1794 for the
average treatment effect on the treated with a 95\% confidence
interval of $[551,3038]$.

\Cref{tab:LaLonde} presents the average treatment effect on the
treated estimates and their 95\% confidence intervals using minimal
weights and its exact balancing counterpart. We use $\delta\cdot$sd
for different levels of approximate balancing. Minimal weights
together with the tuning algorithm produces more efficient mean
average treatment effect on the treated  estimates while remaining
close to the experimental target \$1794. The 95\% confidence intervals
all contain the experimental 95\% confidence interval and they become
more efficient as $\delta$ increases. When $\delta$ grows to as large
as 1 sd, the average treatment effect on the treated  estimates starts
to shift away from the target. This is intuitive as overly large
$\delta$ would imply we are no longer balancing the covariates. In
this regard, we conclude minimal weights produce more efficient
average treatment effect on the treated estimates while being faithful
to the truth (experimental target).

\begin{table}[htb]
\centering
\begin{tabular}{llllll}
\toprule
\multicolumn{1}{c}{Minimize} & \multicolumn{1}{c}{Exact} & \multicolumn{1}{c}{Approx.} \\
\midrule
Absolute Deviation& 712 (2602) & \textbf{744 (1257)} \\
Variance& 1668 (1076) & \textbf{1387 (886)} \\
Negative Entropy& 1706 (958) & \textbf{1382 (1078)}\\
\bottomrule
\end{tabular}
\caption{Average treatment effect on the treated estimates in the
Lalonde study. (We present the estimates as mean(sd).)  Minimal
weights produce more efficient estimates while being faithful to the
truth. \label{tab:LaLonde}}
\end{table}


\subsection{The Wong and Chan Simulation}

We finally study the minimal weights in the \citet{wong2018kernel}
simulation. It starts with a ten-dimensional multivariate standard
Gaussian random vector $Z = (Z_1, \ldots, Z_{10})^\top$ for each
observation. Then it generates ten observed covariates $X =
(X_1,\ldots,X_{10})^\top,$ where
\[X_1=\exp(Z_1/2), \]
\[X_2=Z_2/\{1+\exp(Z_1)\}, \]
\[X_3 = (Z_1Z_3/25 + 0.6)^3,\]
\[X_4 = (Z_2+Z_4+20)^2,\]
\[X_j = Z_j, j = 5, \ldots, 10.\]
The propensity score model is 
\[\mathrm{pr}(T=1\,|\, Z) = \exp(-Z_1-0.1Z_4)/\{1+\exp(-Z_1-0.1Z_4)\}.\]
The study considers two outcome regression models. Model A is
\[Y = 210 + (1.5T-0.5)(27.4Z_1+13.7Z_2+13.7Z_3+13.7Z_4)+\epsilon,\]
and model B is 
\[Y = Z_1Z_2^3Z_3^2Z_4 + Z_4|Z_1|^{0.5}+\epsilon,\]
where $\epsilon\sim N(0,1).$ 

We generate a dataset of size $N = 5000$ and study both the average
treatment effect and the average treatment effect on the treated
estimates. (We take the size of the bootstrap samples as 1/10 of the
original sample size. We default to 10 bootstrap samples for covariate
balance evaluation. We balance the first and second moments of the
covariates.)

Tables \ref{table:Wongchan} presents the root mean squared error of
the weighting mean estimates. Approximate balancing with
\Cref{alg:tune} outperforms exact balancing in many cases, especially
in estimating the average treatment effect. The performance is less
stable with the outcome model A, where it could lead to suboptimal
performance. When the treatment indicator interacts with potential
confounders $Z$'s, classical bootstrap agnostic to the treatment
indicator does not serve as a good indicator of downstream estimation
performance. \Cref{fig: WC_A} shows the mean squared error versus
bootstrapped covariate balance plot. The pattern of bootstrapped
covariate balance roughly aligns with the mean squared error. This
implies that selecting $\delta$ with \Cref{alg:tune} (i.e. selecting
according to the bootstrapped covariate balance) could often result in
close-to-optimal error, especially in estimating the average treatment
effect.

\begin{table}
\begin{subtable}[t]{\linewidth}
\centering
\begin{tabular}{lrrrrr}
\toprule
\multicolumn{1}{c}{} & \multicolumn{2}{c}{Outcome model A} &&\multicolumn{2}{c}{Outcome model B}\\
\multicolumn{1}{c}{Minimize} & \multicolumn{1}{c}{Exact} & \multicolumn{1}{c}{Approx.} & &\multicolumn{1}{c}{Exact} & \multicolumn{1}{c}{Approx.}\\
\midrule
Absolute Deviation& 0.67 & \textbf{0.66} & & \textbf{0.26}& \textbf{0.26}\\
Variance& \textbf{0.72} & 0.79 & & 0.26  & \textbf{0.25}\\
Negative Entropy& \textbf{0.78} & 0.89& & \textbf{0.25} & \textbf{0.25} \\
\bottomrule
\end{tabular}
\caption{Average treatment effect on the treated}
\end{subtable}
\begin{subtable}[t]{\linewidth}
\centering
\begin{tabular}{lrrrrr}
\toprule
\multicolumn{1}{c}{} & \multicolumn{2}{c}{Outcome model A} &&\multicolumn{2}{c}{Outcome model B}\\
\multicolumn{1}{c}{Minimize} & \multicolumn{1}{c}{Exact} & \multicolumn{1}{c}{Approx.} & &\multicolumn{1}{c}{Exact} & \multicolumn{1}{c}{Approx.}\\
\midrule
Absolute Deviation& 0.47 & \textbf{0.45} & & \textbf{0.23} & 0.24\\
Variance& 1.35 & \textbf{0.51} & & 0.31  & \textbf{0.21}\\
Negative Entropy& \textbf{0.44} & 0.52& & \textbf{0.21}& \textbf{0.21} \\
\bottomrule
\end{tabular}
\caption{Average treatment effect}
\end{subtable}
\caption{Root mean squared error in the Wong-Chan study. Approximate
balancing often produce similar-or-better quality estimates than exact
balancing.\label{table:Wongchan}}
\end{table}

\begin{figure*}[t!]
    \centering
    \begin{subfigure}[t]{0.5\textwidth}
        \centering
        \includegraphics[width=\linewidth]{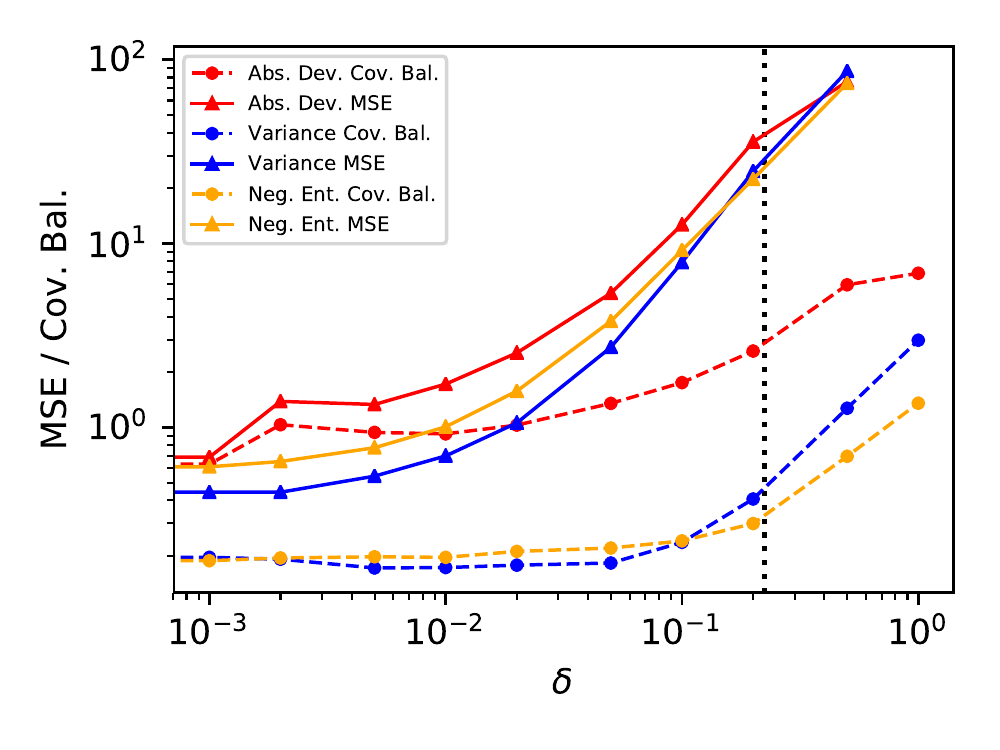}
        \caption{Average treatment effect on the treated}
    \end{subfigure}%
    ~ 
    \begin{subfigure}[t]{0.5\textwidth}
        \centering
        \includegraphics[width=\linewidth]{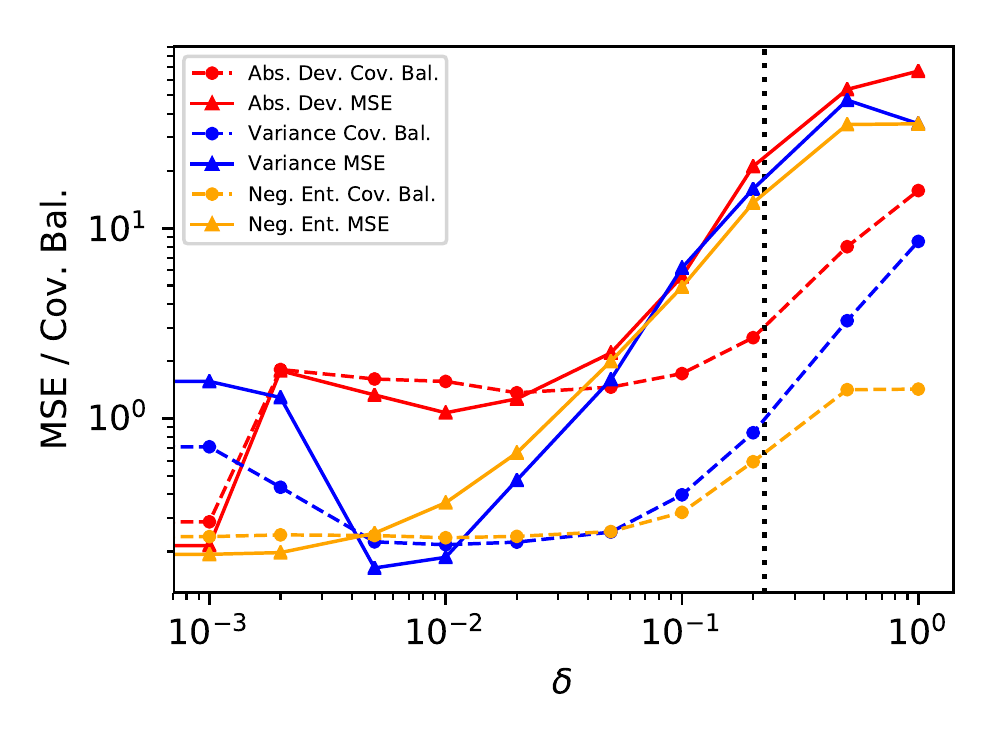}
        \caption{Average treatment effect}
    \end{subfigure}
    \caption{Bootstrapped covariate balance $C_S$ and mean squared
    error for different values of $\delta$ for the average treatment
    effect on the treated in the Wong and Chan study. Using $C_S$ to
    select $\delta$ as in \Cref{alg:tune} coincides with or neighbors
    the optimal $\delta$ with the smallest error, especially in
    estimating the average treatment effect. (The horizontal axis
    start from $\delta=0$. The vertical dotted line indicates $\delta
    = K^{-1/2}$, where $K$ is the number of covariates being balanced.
    We recommend not choosing $\delta$'s bigger than $K^{-1/2}$
    because they likely break the assumptions required by the
    asymptotics. )\label{fig: WC_A}}
\end{figure*}

\end{document}